\def\eqref#1{equation~\ref{#1}}
\def\1{\bm{1}}
\DeclareMathAlphabet{\mathsfit}{\encodingdefault}{\sfdefault}{m}{sl}
\SetMathAlphabet{\mathsfit}{bold}{\encodingdefault}{\sfdefault}{bx}{n}
\theoremstyle{plain}
\newtheorem{theorem}{Theorem}%[section]
\newtheorem{proposition}{Proposition}[theorem]
\newtheorem{lemma}{Lemma}[theorem]
\newtheorem{corollary}{Corollary}[theorem]
\theoremstyle{definition}
\newtheorem{definition}[theorem]{Definition}
\newtheorem{assumption}{Assumption}[theorem]
\theoremstyle{remark}
\newtheorem{remark}[theorem]{Remark}
\newtheorem{theorem}{Theorem}
\theoremstyle{definition}
\newtheorem{definition}{Definition}
\theoremstyle{remark}
\newtheorem{remark}{Remark}
\newcommand{\Cc}{\mathcal{C}}
\newcommand{\Yc}{\mathcal{Y}}
\newcommand{\Sc}{\mathcal{S}}
\newcommand{\Pb}{\mathbb{P}}
\newcommand{\jing}[1]{{\color{red}#1}}
\newcommand{\jing}[1]{#}
\newcommand{\jingc}[1]{{\color{red}(Jing: #1)}}
\newcommand{\jingc}[1]{}
\newcommand{\jingf}[1]{{\color{green}(Jing: #1)}}
\newcommand{\jingf}[1]{#}
\newcommand{\gfy}[1]{{\color{blue}(gfy: #1)}}
\newcommand{\gfy}[1]{#}
\newcommand{\tcb}[1]{{\color{blue}#1}}
\newcommand{\tcr}[1]{{\color{red}#1}}
\title{Data-adaptive Differentially Private Prompt Synthesis for In-Context Learning}
\author{%
  Fengyu~Gao$^{*}$ \\
  % School of Electrical Engineering and Computer Science\\
  University of Virginia\\
  \texttt{wan6jj@virginia.edu} \\
  \And
  Ruida Zhou\thanks{co-first author.} \\
  % Department of Electrical and Computer Engineering \\
  University of California Los Angeles\\
  \texttt{ruida@g.ucla.edu} \\
  \AND
  Tianhao Wang \\
  % Department of Computer Science \\
  University of Virginia \\
  \texttt{tianhao@virginia.edu} \\
  \And
  Cong Shen \\
  % Department of Electrical and Computer Engineering \\
  University of Virginia \\
  \texttt{cong@virginia.edu} \\
  \And
  Jing Yang \\
  % School of Electrical Engineering and Computer Science\\
  University of Virginia\\
  \texttt{yangjing@virginia.edu} \\
}
\begin{document}

\maketitle

\begin{abstract}

Large Language Models (LLMs) rely on the contextual information embedded in examples/demonstrations to perform in-context learning (ICL). To mitigate the risk of LLMs potentially leaking private information contained in examples in the prompt, we introduce a novel data-adaptive differentially private algorithm called {\bf AdaDPSyn} to generate synthetic examples from the private dataset and then use these synthetic examples to perform ICL. The objective of AdaDPSyn is to adaptively adjust the noise level in the data synthesis mechanism according to the inherent {statistical properties} of the data, thereby preserving high ICL accuracy while maintaining formal differential privacy guarantees. A key innovation in AdaDPSyn is the {\it Precision-Focused Iterative Radius Reduction} technique, which dynamically refines the aggregation radius - the scope of data grouping for noise addition - based on patterns observed in data clustering, thereby minimizing the amount of additive noise. We conduct extensive experiments on standard benchmarks and compare AdaDPSyn with DP few-shot generation algorithm \citep{tang2023privacy}. The experiments demonstrate that AdaDPSyn not only outperforms DP few-shot generation, but also maintains high accuracy levels close to those of non-private baselines, providing an effective solution for ICL with privacy protection.

\end{abstract}

\section{Introduction}

In-context learning (ICL) \citep{brown2020language,min2022rethinking} enables large language models (LLMs) \citep{openai2023gpt} to adapt to domain-specific information without modifying the pre-trained model. This adaptation is achieved by conditioning the model on a {\it prompt}, which contains an instruction and a series of task-specific question-answer pairs (called {\it demonstrations} or {\it examples}). Using this prompt, LLMs can then generate responses that are tailored to the corresponding task. The ease of usage and cost-benefit of ICL have motivated {the adoption of LLMs in several applications, such as machine translation \citep{sia2023context}, code generation \citep{pourreza2023din} and customer service \citep{lee2022does}.} 
%several organizations to integrate LLMs into their operations, significantly enhancing their services by leveraging the extensive, pre-trained knowledge embedded in LLMs \citep{dong2022survey}. 

However, privacy is a significant concern when incorporating users' data into prompts, especially in areas like healthcare and finance \citep{wang2023decodingtrust,priyanshu2023chatbots,duan2023flocks}, as this introduces risks of exposing sensitive personal information. To address this concern, existing works \citep{wu2023privacy,tang2023privacy,duan2023flocks,carey2024dp,hong2023dp} have explored approaches to mitigate these risks using the rigorous privacy safeguards provided by differential privacy (DP) \citep{dwork2006calibrating}. Notably, \citet{tang2023privacy} proposed to generate synthetic few-shot demonstrations from the private dataset for use in ICL inference and guarantee DP with respect to examples in the private dataset. This method uses the classic DP sample-and-aggregate framework \citep{nissim2007smooth,papernot2016semi,papernot2018scalable}, partitioning subsampled private data into disjoint subsets, each of which is used in a prompt for an LLM to generate tokens. These tokens are then privately aggregated, adhering to DP guarantees, to create synthetic demonstrations. This framework allows the synthetic demonstrations to be used for an infinite number of queries without incurring any additional privacy costs. %\ruida{"Notably, Tang et al. adopts this approach." How/Why is it distinguished from other approaches? State the advantages of Tang's approach over the others and thus we would like to conduct research along such approach.}

The main design challenge in \citet{tang2023privacy} is aggregating the responses from an LLM privately. \citet{tang2023privacy} used a {\it data-independent} way of adding the same amount of Gaussian noise during private aggregation. However, since each token generated by the LLM can contain varying levels of private information, applying noise homogeneously across an entire sentence or document 
% \jing{ignores the inherent diverse sensitivity of individual tokens and} 
may {not achieve the optimal privacy-utility tradeoff in the ICL framework.} %\jing{fail to learn the underlying language pattern well.?} 
Therefore, it is more strategic to design DP mechanisms that adapt the noise level in a {\it data-adaptive} manner. 
Intuitively, by adaptively adjusting the additive noise level based on the properties of responses for individual tokens, we avoid adding unnecessarily large noise to tokens with high agreement or predictability, which indicates low risk, thereby achieving improved utility for ICL. 
% \jing{Intuitively, by adaptively adjusting the additive noise level according to the sensitivity of individual tokens, we avoid adding unnecessarily large noise to less sensitive tokens, thus achieving improved utility for ICL. }
The main question we aim to answer is:%\gfy{"sensitivity" is not used precisely here. All tokens are of the same sensitivity since sensitivity qualifies the worst case. Data adaptive is more about leveraging the structure of data, like confidence, to assign the noise.}

%\textcolor{red}{This paragraph does not capture the main techniques very precisely. All data are of the same sensitivity. Data adaptive is more about leveraging the structure of data, like confidence, to assign the noise.}

%The main goal of this work is to leverage the structure of data to reduce the amount of noise necessary without compromising DP guarantees, thereby preserving higher accuracy for ICL. We ask:
\vspace{-10pt}
\begin{center}
\textit{Can we design a data-adaptive differentially private prompt synthesis algorithm to protect the private information contained in the prompt data and effectively perform ICL?}
\end{center}
\vspace{-8pt}
In this work, we provide an affirmative answer to this question. Our main contributions are summarized as follows:

\begin{figure}[t]
    \centering
    \vspace{-12pt}
    \includegraphics[width=14cm]{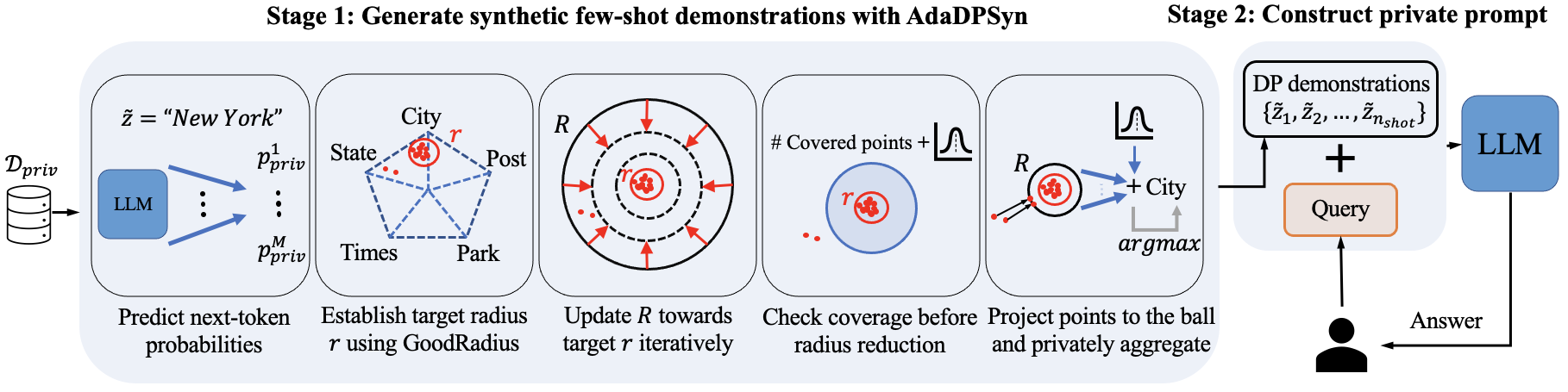}
    \caption{\small Two-Stage framework for privacy-preserving ICL. In Stage 1, DP-protected demonstrations are generated with AdaDPSyn, where we illustrate the process of generating the next token ``City'' following ``New York''. AdaDPSyn uses a novel clustering approach to dynamically aggregate next-token probabilities.   
    The process iterates until $n_\text{shot}$ demonstrations are generated.     
    In Stage 2, the generated DP-protected demonstrations are combined with a user query to construct the prompt. The constructed prompt is then sent to an LLM and the answer is returned to the user.}%\jingf{need to describe the two stages briefly and highlight the novelty}}
    \vspace{-12pt}
    \label{fig:2step} 
\end{figure}

\vspace{-0.2cm}

\begin{itemize}[topsep=0pt, left=0pt] 
\item We introduce a new algorithm AdaDPSyn in \Cref{alg:adaptive DP few-shot generation} to generate synthetic few-shot examples for use in ICL prompts (\Cref{fig:2step}). 
Similar to \citet{tang2023privacy}, AdaDPSyn begins by obtaining next-token generation probability vectors from an LLM using multiple subsets of private dataset, which will then be aggregated to generate the next token differentially privately. The essential novelty of AdaDPSyn lies in our proposed {\it Precision-Focused Iterative Radius Reduction} technique, which adaptively adds noise level during aggregation by exploiting the cluster structure of the predicted distributions of the next token. {Besides, we believe that our data-adaptive technique may be applicable in various problems beyond prompt synthesis, thus is of independent interest.}

\vspace{-0.1cm}
\item We conduct a rigorous privacy analysis of AdaDPSyn and confirm it follows $(\varepsilon,\delta)$-DP. In our analysis, we use the Rényi Differential Privacy (RDP) \citep{mironov2017renyi} framework to track and tightly quantify privacy costs from the composition of AdaDPSyn's components across multiple iterations. By applying RDP composition rules and subsampling amplification, we establish theoretical privacy guarantees for AdaDPSyn. %\jingf{expand this a bit to emphasize the is due to the sophisticated design of the Precision-Focused Iterative Radius Reduction technique? } 
\vspace{-0.1cm}

\item

We empirically evaluate the performance of AdaDPSyn against the DP few-shot generation algorithm proposed by \citet{tang2023privacy} on standard benchmarks. We demonstrate that our method outperforms this baseline across a plethora of privacy settings. For example, in the AGNews classification task under a strict privacy level of $\varepsilon = 1$, our method achieves an accuracy of $65.42\%$, higher than $60.74\%$ by DP few-shot generation. In the information extraction task MIT-G at $\varepsilon = 1$, our method reaches $37.59\%$ accuracy, compared to $31.15\%$ by DP few-shot generation. 
Additionally, we show that our AdaDPSyn algorithm can closely match the performance of non-private baselines even in stringent privacy settings. For instance, at a strict privacy level of $\varepsilon = 1$, our algorithm attains an accuracy of $65.42\%$ on the AGNews task, nearly reaching the non-private baseline's accuracy of $65.92\%$. Similarly, in the DBPedia task, our method shows minimal performance drop-off compared to the non-private scenarios. 
These results not only underscore the effectiveness of our AdaDPSyn algorithm compared with data-independent methods, but also demonstrate its competitiveness with non-private solutions.

\end{itemize}

% Independently with our work, \citet{wu2023privacy} and \citet{duan2023flocks} also studied differentially private ICL with data-dependent privacy analysis. While the motivations are similar, there are several differences between our work and theirs. We provide a detailed comparison in \Cref{sec:related}.
\vspace{-0.2cm}
\section{Related Work}\label{sec:related}

% In this work, we explore the in-context learning (ICL) framework, where a few demonstrations from a dataset enable LLMs to perform well on downstream tasks without fine-tuning \citep{brown2020language}. As we focus on the privacy-preserving aspect of the ICL framework, we refer the readers to the survey \citep{dong2022survey} for a comprehensive study on this popular topic.

\vspace{-8pt}
{\bf Differentially Private In-Context Learning and Fine-tuning of LLMs.} \citet{duan2023flocks}
first study privacy leakage through membership inference attacks (MIAs) and suggest mitigating MIA risks using DP ensembling \citep{papernot2018scalable} on various model versions. Recent studies \citep{wu2023privacy,tang2023privacy,duan2023flocks,carey2024dp} have developed methods based on DP ensembling \citep{papernot2018scalable} to preserve privacy of the underlying prompt data. Notably, \citet{tang2023privacy} develop an algorithm that generates differentially private synthetic few-shot demonstrations from a private dataset for ICL inference, {which serves as the baseline of this work and will be discussed in greater detail in subsequent sections}. 
\citet{hong2023dp} propose the DP-OPT algorithm that generates private and transferable instructions within prompts on the client side to be used with cloud-hosted LLMs. 
% , while we focus on generating examples for ICL.
% In contrast, our method applies more broadly across various tasks.
% fine-tunes a DP instructional prompt for sentiment classification tasks locally, which is then used with cloud-hosted LLMs, while our method focuses on producing synthetic examples in the prompt and applies more broadly across various tasks. 
\citet{carey2024dp} focus on protecting tabular data used for ICL. We note that all of those works \citep{tang2023privacy,carey2024dp,hong2023dp} design private aggregation methods in a {\it data-independent} way, i.e., adding the same amount of noise during private aggregation.

Recently, \citet{wu2023privacy} and \citet{duan2023flocks} study differentially private ICL with {\it data-dependent} privacy analysis. \citet{wu2023privacy} propose DP inference by privately aggregating results from multiple queries using disjoint sampled demonstrations. They use the classic propose-test-release (PTR) paradigm \citep{zhu2022adaptive} to select the top-$K$ tokens from LLM outputs. If the vote count difference between the $K$-th and $(K + 1)$-th highest tokens exceeds two, the tokens can be released without further privatization. However, ICL tasks can involve an unlimited number of queries, and \textit{\citet{wu2023privacy} consume the privacy budget per query, limiting the number of queries that can be answered within a given budget.} %To overcome this limitation, we adopt a method that generates DP synthetic few-shot examples, enabling our approach to handle unlimited queries without exceeding the privacy budget.

\citet{duan2023flocks} assume {\it the existence of an unlabeled public dataset}, which contrasts with our approach that requires no public data. In their method, an ensemble of teacher models uses private majority voting to label this public dataset. The labeled data then helps construct private prompts. Following \citet{papernot2018scalable}, they focus on scenarios where a high agreement among teachers can reduce privacy costs below data-independent bounds. When teacher agreement is low, they discard the public data. In contrast, our method is better suited for scenarios where assuming the availability of unlabeled public data with a distribution similar to private data is unrealistic, such as in healthcare or industrial applications. Moreover, \citet{duan2023flocks} mainly focus on text classification tasks with a restricted label space, while our method applies to a broader range of tasks.

On the other hand, several studies have focused on private fine-tuning of LLMs \citep{yu2021differentially,li2021large,chen2024privacy,tang2024private,zhang2023dpzero,he2022exploring}. Besides, \citet{majmudar2022differentially} introduce DP at the decoding stage of a pre-trained LLM. 
% These methods typically require non-private training with private data to adapt to specific domains.

%\jingf{those are not ICL?}\gfy{Removed DP fine-tuning works since they are not for ICL.}

{\bf Adaptive Differential Privacy.} Adaptive differential privacy primarily focuses on privately calibrating noise to the local sensitivity, which is a dataset-dependent function and is much smaller than the global sensitivity. Two representative solutions include the smooth sensitivity framework \citep{nissim2007smooth} and the propose-test-release (PTR) framework \citep{dwork2009differential}. The main idea of the smooth sensitivity framework is to add noise calibrated to the smooth sensitivity, an upper bound on the local sensitivity which changes slowly between neighboring datasets \citep{bun2019average,gonem2018smooth,fletcher2017differentially,zafarani2020differentially,hamman2023can,sun2020differentially}. The PTR framework involves proposing bounds of the local sensitivity and testing its validity. If the test is passed, the noise is calibrated according to the proposed bound \citep{thakurta2013differentially,liu2022differential,redberg2023generalized,wang2022renyi,liu2022differential}.

In this work, we release a confidence bound of the local sensitivity in a differentially private manner, and calibrate noise accordingly, similar to the idea in prior studies \citep{blocki2012johnson,kasiviswanathan2013analyzing,wang2018revisiting,decarolis2020end}. In this body of research, \citet{blocki2012johnson} and \citet{kasiviswanathan2013analyzing} focus on network data. \citet{wang2018revisiting} revisits the DP linear regression problem. \citet{decarolis2020end} study the Latent Dirichlet Allocation (LDA) model. Our focus is on the ICL domain, where we develop a novel clustering method and Precision-Focused Iterative Radius Reduction technique to address domain-specific challenges.% \gfy{The idea of releasing local sensitivity differentially privately is used in these cited works. In spirit similar to our DP clustering and then DP aggregation. But our techniques are different.}

{\bf Differentially Private Synthetic Text Generation.} Our work falls within the broader scope of DP synthetic text generation, which typically requires generating large volumes of synthetic data. %However, we focus on the ICL framework, where few-shot demonstrations are sufficient. 
There is a line of work on DP synthetic text generation through private fine-tuning \citep{yue2022synthetic,mattern2022differentially,mireshghallah2022privacy,carranza2023privacy,yu2024privacy,wu2024prompt}. SeqPATE \citep{tian2022seqpate} and Submix \citep{ginart2022submix} are methods based on PATE \citep{papernot2016semi,papernot2018scalable} for text generation, {which utilizes an ensemble of teacher models trained on private data subsets to fine-tune a student model.} %by generating a private model through fine-tuning with public data, using an ensemble of teachers trained on private subsets. \jingf{need some elaboration of PATE}
Another direction \citep{feyisetan2020privacy,xu2020bert,du2023sanitizing,utpala2023locally} focuses on sanitizing user texts locally before server submission based on local differential privacy \citep{chatzikokolakis2013broadening}. While this approach typically reduces text utility, recent studies \citep{chen2022customized,arnold2023guiding,arnold2023driving,mireshghallah2022privacy} propose to enhance utility by incorporating inherent properties of language such as semantic similarity \citep{chen2022customized}, context \citep{arnold2023guiding} and syntax \citep{arnold2023driving,mireshghallah2022privacy}.

\section{Problem Definition, Threat Model, and Notations}

{\bf In-context learning.} %\gfy{Have rewrote this part.} 
% LLMs such as GPT \citep{radford2018improving,openai2023gpt}, Llama \citep{touvron2023llama} and OPT \citep{zhang2022opt} are pre-trained to generate tokens conditioned on previous context, denoted as \(\mathrm{LLM}(v_n \mid v_1, \ldots, v_{n-1}) \in [0, 1]\), where each \(v_i \in \mathcal{V}\) is a token in the vocabulary \(\mathcal{V}\), and language model predicts the probability of the next token $v_n$ for given tokens $v_1, \ldots, v_{n-1}$. \jingf{avoid using acronyms ULLM or TLLM. look too busy. since they do not appear often, just use untrusted/trusted LLM.}
% We categorize LLMs into two groups: trusted LLMs such as local open-source models, and untrusted LLMs, which may present privacy risks. \tcb{Not precise}
Given a query $x$, a candidate answer set (label set) $\mathcal{Y}$ and a pre-trained large language model \(\mathrm{LLM}\) such as GPT \citep{radford2018improving,openai2023gpt}, ICL aims to predict a label $y\in \Yc$ of query $x$, using \(\mathrm{LLM}\) conditioned on $n_\text{shot}$ demonstrations $\Cc=\left\{\left(x_1^{\prime}, y_1^{\prime}\right),\left(x_2^{\prime}, y_2^{\prime}\right), \ldots,\left(x_{n_\text{shot}}^{\prime}, y_{n_\text{shot}}^{\prime}\right)\right\}$, where $\left(x_i^{\prime}, y_i^{\prime}\right)$ is an input-label pair.

% To utilize these demonstrations, we first concatenate demonstrations $\Cc$ to the query input $x$ using an appropriate format and instructions in a prompt. As an example, for the sentiment analysis task, a prompt might look like this:

% \texttt{Classify the text as Positive or Negative.\\
% Text: What a great movie!, Label: Positive\\
% Text: I don't like the book., Label:
% }

% In this example, $\Cc = \{(\text{What a great movie!},\text{Positive}) \}$, $x = \text{I don't like the book.}$, $\mathcal{Y} = \{\text{Positive}, \text{Negative}\}$. 
% \(\mathrm{LLM}\) then processes this prompt to predict the label of $x$. \jingc{leave this to the appendix in experimental details}

In this paper, we consider both classification and information extraction tasks. For classification tasks with a restricted label space $\mathcal{Y}=\{y_1,\ldots,y_m\}$, we first compute the probability of each label $y_i\in \mathcal{Y}$ using $\mathrm{LLM}$ as \(\Pb_\mathrm{LLM}(y_i \mid \Cc, x)\). Then we predict the final label $y$ by selecting the label with the highest probability from the label set $\mathcal{Y}$, i.e., 
%\[
$y = \arg \max_{y_i \in \mathcal{Y}}\Pb_\mathrm{LLM}(y_i \mid \Cc, x).$
%\]
For information extraction tasks, the label space $\Yc$ is unrestricted, and can include any sequence of tokens from the LLM's vocabulary. In these tasks, we predict $y$ by having $\mathrm{LLM}$ generate a sequence of tokens until it produces a special end-of-sequence (EOS) token, marking the completion of the output. %\jingc{Specific prompt construction}

{\bf Differentially private ICL.} Our objective is to protect the privacy of demonstrations \(\Cc \subseteq \mathcal{D}_{\text{priv}}\) in the prompt against an adversary that aims to access or infer private information about the demonstrations. Since ICL tasks may encounter an unlimited number of queries, to protect the privacy of the dataset $\mathcal{D}_{\text{priv}}$, we adopt a differentially private data synthesis approach to generate synthetic few-shot examples $\Cc=\left\{\widetilde{z}_1, \cdots, \widetilde{z}_{n_{\text {shots}}}\right\}$, where $\widetilde{z}_i = (\widetilde{x}_i,\widetilde{y}_i)$, from the underlying distribution of $\mathcal{D}_{\text {priv}}$ while satisfying $(\varepsilon,\delta)$-differential privacy on the private dataset $\mathcal{D}_{\text {priv}}$. For each ICL task, we generate task-specific synthetic examples $\Cc$ with the help of an LLM and a private dataset $\mathcal{D}_{\text {priv}}$. Then ICL can be performed using demonstrations from $\Cc$ in a prompt tailored to each task. By default, we incorporate all demonstrations from $\Cc$ in a prompt for each task.%\jingf{randomly selected demonstrations from?}\jingf{do you need to generate a $\Cc$ for each prompt? or a dataset can be used for all prompts? please clarify}

We formally define $(\varepsilon,\delta)$-differential privacy ($(\varepsilon,\delta)$-DP) as follows.

% We aim to protect the privacy of the dataset \(\mathcal{D}_{\text{priv}}\) when tackling the ICL tasks, which may encounter an unlimited number of queries. We define two databases $\mathcal{D}$ and $\mathcal{D}'$ as neighbors if they differ by a single entry \(z_i\). In our threat model, our objective is to protect the privacy of demonstrations \(C \subseteq \mathcal{D}_{\text{priv}}\) in the prompt against an adversary, aiming to access or infer private information about them. By generating synthetic few-shot examples $C=\left\{\widetilde{z}_1, \cdots, \widetilde{z}_{n_{\text {shots}}}\right\}$ from the underlying distribution of $\mathcal{D}_{\text {priv}}$ while satisfying $(\varepsilon,\delta)$-DP on the private dataset $\mathcal{D}_{\text {priv}}$, we ensure the privacy of users in $\mathcal{D}_{\text {priv}}$. Formally, we adopt the rigorous differential privacy as follows. 
\begin{definition}[$(\varepsilon,\delta)$-Differential Privacy \citep{dwork2006our}]
    A randomized algorithm $A$ is $(\varepsilon,\delta)$-differentially private if for any two neighboring inputs $\mathcal{D}$ and $\mathcal{D}^\prime$ that differ by a single entry and any set $\mathcal{S}$ of possible outputs: $\Pb[A(\mathcal{D}) \in \mathcal{S}] \leq e^\varepsilon \Pb[A(\mathcal{D}^\prime) \in \mathcal{S}] + \delta$.
\end{definition}

\section{Proposed Method}\label{sec:method}

We use a two-stage framework to perform differentially private ICL, as illustrated in \Cref{fig:2step}. Generally speaking, in \textbf{Stage 1}, we generate DP synthetic few-shot demonstrations $\Cc=\left\{\widetilde{z}_1, \cdots, \widetilde{z}_{n_{\text {shots}}}\right\}$ based on the private dataset $\mathcal{D}_{\text{priv}}$; and in 
\textbf{Stage 2},  we use these synthetic demonstrations for ICL.

The main challenge of this framework lies in the DP synthetic data generation (Stage 1). To maintain DP, a PATE-like framework is utilized, similar to \cite{tang2023privacy}. 
Take the generation of a synthetic demonstration $\Tilde{z} = (\tilde{x}, \tilde{y}) \in \mathcal{C}$ as an example. First, label $\tilde{y} \in \mathcal{Y}$ is randomly selected independently of private dataset $\mathcal{D}_{\text{priv}}$, and its corresponding demonstration $\tilde{x}$ is generated token by token, starting from an empty string $\tilde{x}=$`'. The next token for $\tilde{x}$ is generated by aggregating probability vectors $p_\text{priv}^{1},\ldots,p_\text{priv}^{M}$, each of which is calculated by an LLM taking some prompt as input. Each prompt is a concatenation of a subset of examples (in a reverse order $(y_i, x_i)$) from the private dataset $\mathcal{D}_{\text{priv}}=\{(x_i,y_i)\}$, the synthetic label $\tilde{y}$ and the currently generated string $\tilde{x}$. Due to the dependence of the probability vectors on private dataset $\mathcal{D}_{\text{priv}}$, DP aggregation is required, and a classic way to achieve this is by adding noise (e.g., Gaussian mechanism) as in \cite{tang2023privacy}.

% by a newly proposed algorithm AdaDPSyn. Each synthetic demonstration is associated with a randomly selected label $y$. Taking label $y$, AdaDPSyn generates a synthetic demonstration sequentially token by token, starting from an empty string. Given generated sequence of tokens in a demonstration, it computes next-token generation probability vectors $p_\text{priv}^{1},\ldots,p_\text{priv}^{M}$ from a trusted LLM using $M$ subsets of examples from the private dataset $\mathcal{D}_\text{priv}$ as prompts, and then aggregates these vectors privately in a {\it data-adaptive} manner to create the next token. This iterative process is repeated multiple times until all desired demonstrations are generated. \textit{A key feature of AdaDPSyn is that it applies less noise during aggregation when there is high consensus on the next token's prediction across the prompts used by the trusted LLM.}

% The next token is generated by aggregating a set of next-token generation probability vectors $p_\text{priv}^{1},\ldots,p_\text{priv}^{M}$, each of which is predicted from a trusted LLM using one of $M$ example subsets in private dataset $\mathcal{D}_{\text{priv}}$, 

Our approach is motivated by a hypothesis that the next-token predictions by different examples approximately reach a consensus. Once this hypothesis holds, the key innovation of our approach is to leverage this feature and reduce noise levels during aggregation in a data-adaptive manner. We verify the hypothesis quantitatively and tackle the key challenge in the following two sections, respectively. 

\subsection{Consensus Behavior Measured by Clustering}

{We quantitatively measure the degree of consensus among the next-token generation probability vectors $p_\text{priv}^{1},\ldots,p_\text{priv}^{M}$. Specifically, }
we approach this by visualizing consensus through the geometric concept of enclosing these vectors within a minimal enclosing ball, defined as an $(r,c,\rho)$-ball, where $r$ is the radius, $c$ is the center and $\rho \in (0,1]$ represents the {required portion of vectors covered in the ball}. %\jingc{shall we formally define the minimal enclosing ball here? like a $(r,\rho)$-ball, where $\rho$ is the coverage probability?} 
{For a given coverage requirement $\rho$, }a smaller $r$ indicates a tighter grouping, reflecting a high consensus among the model outputs. 
% This provides an intuitive measure of consensus and enables adaptive noise scaling based on how tightly the model's outputs clustered together. 
We hypothesize that {\it these next-token generation probability vectors are typically highly clustered in a ball with a small radius $r$}.
%\jingf{do not see why "therefore". this should be based on some empirical observation and intuitive explanation} 

\begin{wraptable}{r}{0.3\textwidth}
    \vspace{-0.4cm}
    \caption{\small %$80 \%$ of the next-token generation probability vectors from the LLM, each using a different subset of examples,  fall within a ball of radius $r$. This table provides 
    The mean and standard deviation of $r$ over 5 runs with different random seeds.}
    \label{tab:r}
    \vspace{-0.4cm}
    \begin{center}
    \adjustbox{max width= \linewidth}{
    \begin{tabular}{cc}
    \Xhline{3\arrayrulewidth}
    Task & radius $r$ \\
    \Xhline{2\arrayrulewidth}
    AGNews & $0.06_{0.02}$  \\
    DBPedia & $0.05_{0.01}$ \\
    TREC & $0.12_{0.01}$ \\
    MIT-G & $0.15_{0.01}$ \\
    MIT-D & $0.14_{0.01}$ \\
    \Xhline{3\arrayrulewidth}
    \end{tabular}
    }
    \end{center}
    
    \begin{center}
        \scriptsize

	\end{center} 
	\vspace{-0.3in}
		
\end{wraptable}

% \begin{wraptable}{r}{0.5\textwidth}
%     \vspace{-0.4cm}
%     \caption{\small $80 \%$ of the next-token generation probability vectors from LLMs fall within a ball of radius $r$. This table provides the mean and standard deviation of $r$ over 5 runs with different random seeds.}
%     \label{tab:r}
%     \begin{center}
%     \adjustbox{max width= \linewidth}{
%     \begin{tabular}{ccccc}
%     \Xhline{3\arrayrulewidth}
%     Task & $\varepsilon = 1$ & $\varepsilon = 2$ & $\varepsilon = 4$ & $\varepsilon = 8$ \\
%     \Xhline{2\arrayrulewidth}
%     AGNews & $0.06_{0.02}$ & $0.06_{0.02}$ & $0.05_{0.01}$ & $0.06_{0.02}$ \\
%     DBPedia & $0.04_{0.01}$ & $0.05_{0.01}$ & $0.05_{0.02}$ & $0.05_{0.01}$ \\
%     TREC & $0.11_{0.01}$ & $0.11_{0.01}$ & $0.11_{0.01}$ & $0.12_{0.01}$ \\
%     MIT-G & $0.17_{0.03}$ & $0.16_{0.01}$ & $0.16_{0.01}$ & $0.15_{0.01}$ \\
%     MIT-D & $0.15_{0.02}$ & $0.15_{0.01}$ & $0.14_{0.01}$ & $0.14_{0.01}$ \\
%     \Xhline{3\arrayrulewidth}
%     \end{tabular}
%     }
%     \end{center}
    
%     \begin{center}
%         \scriptsize

% 	\end{center} 
% 	\vspace{-0.2in}
		
% \end{wraptable}

To verify the hypothesis, we conduct experiments using the DP few-shot generation algorithm \citep{tang2023privacy}, and report the radius $r$ of the minimal ball that encloses at least $80 \%$ ($\rho = 0.8$) of the next-token probability vectors generated by {using} different private examples in \Cref{tab:r}. %The DP few-shot generation algorithm uses an LLM with disjoint subsets of private data as prompts to generate DP synthetic demonstrations. 
For determining a small radius $r$ that contains at least $80 \%$ of the input points, we adopt the GoodRadius algorithm in \citet{nissim2016locating}. We conduct experiments on three classification tasks: AGNews \citep{zhang2015character}, DBPedia \citep{zhang2015character} and TREC \citep{voorhees2000building}, as well as two information extraction tasks, MIT-G and MIT-D \citep{liu2012conversational}, using the Llama-2-7b-hf model \citep{touvron2023llama}\footnote{\url{https://huggingface.co/meta-llama/Llama-2-7b-hf}.}. 
Details of the DP few-shot generation algorithm parameters can be found in \Cref{tab:tang}.

{As shown in \Cref{tab:r}, the measured radius $r$, ranging from {0.05 to 0.15}, is much smaller than $\sqrt{2}/2$, the radius of the minimal ball that contains the probability simplex. These results confirm a high degree of clustering among the model outputs.
Intuitively, if we can access such a ball of small radius $r$ differentially privately, we can perform a differentially private projected mean aggregation that projects the probability vectors to the ball and calculates the projected mean with additive Gaussian noise. The noise level is determined by the sensitivity $2r$. Note that in \cite{tang2023privacy}, without leveraging the clustering behavior of the data, a much higher noise level corresponding to sensitivity $\sqrt{2}$ is employed. Comparing these two cases, the sensitivity $2r$ is approximately $5\times$ to $14\times$ smaller than $\sqrt{2}$, thus leading to more accurate aggregation due to lower noises. The key challenge is then how to differentially privately access a small ball that contains a majority number of probability vectors and perform the differentially private aggregation correspondingly. To tackle this challenge, we propose a novel Precision-Focused Iterative Radius Reduction technique.}

%Suppose we have access to a differentially private ball of the radius $r$ (or similar scale) illustrated in \Cref{tab:r} that contains the majority of probability vectors. We can then perform a differentially private projected mean aggregation that projects the probabilities to the ball and calculates the projected mean with additive Gaussian noise. The noise level is determined by the sensitivity $2r$. Note that in the previous results \cite{tang2023privacy} without leveraging the clustering behavior of the data, additive Gaussian noise with a larger noise level corresponding to sensitivity $\sqrt{2}$ is employed. Comparing these two cases, the noise level $2r$ (ranging from 0.1 to 0.3 in \Cref{tab:r}) is approximately $5\times$ to $14\times$ smaller than $\sqrt{2}$, thus leading to more accurate aggregation due to smaller noises. However, the key challenge is how to differentially privately access a small ball that contains a majority number of probability vectors and perform the differentially private aggregation correspondingly. To tackle this challenge, we propose a novel Precision-Focused Iterative Radius Reduction technique.

%\jingf{usually only use past tense in the conclusion section to describe what you have done in this paper.}

\subsection{Precision-Focused Iterative Radius Reduction}\label{sec:pirr}

Building on the observation that next-token generation probability vectors from the LLM are highly clustered, we utilize an intuitive approach to adaptively adjust the noise level during aggregation: we first locate the {minimal} ball covering most vectors under DP and then project all points to this ball in $\ell_2$-norm to bound the influence of each point, thereby minimizing noise addition. This method, which adaptively leverages the inherent structure of the probability vectors, is essential to the success of our proposed method. 

The task of finding a minimal enclosing ball under DP is known as the DP 1-clustering problem in the literature \citep{nissim2016locating}. However, existing methods \citep{nissim2016locating,nissim2018clustering,ghazi2020differentially} {are hard to implement and computationally expensive}. 

To this end, 
% \jing{efficiently handle the relatively small number of probabilities vectors to be clustered in this work}, 
we propose a novel technique called {\bf Precision-Focused Iterative Radius Reduction}. We start with a large radius $R$ and judiciously reduce it towards a small target radius $r$ through iterative refinement. 
The target radius $r$ is obtained under DP constraints using a tractable subroutine GoodRadius from DP 1-clustering  \citep{nissim2016locating}, which provides a threshold such that the designed radius should never go below. 
The reduction of radius $R$ is carefully monitored by coverage checks, which ensure the majority of the probability vectors are covered by a ball with that radius.
The center of the ball is efficiently obtained by {private projected mean estimation}.
This dynamic adjustment of the ball ensures it is minimized effectively according to the distribution of the points along the adjustments. 
The amount of noise required for private projected mean estimation can be reduced, thus improving the quality of the synthetic demonstrations and the performance of ICL. {We elaborate this Precision-Focused Iterative Radius Reduction technique in \Cref{sec:adaptive alg}.}

% instead of adopting the traditional DP 1-clustering methods that search for an optimal center based on the radius $r$, {which can be hard to implement}, %intensive\footnote{The DP center determination is computationally \jing{tractable?} in the existing DP 1-clustering algorithms \cite{nissim2016locating}}, 
% We start with a large radius $R$ and judiciously reduce it towards the target radius $r$ through iterative refinement. 

% By optimizing the radius used for clipping, we directly reduce the amount of noise required to maintain privacy, thus improving the quality of the synthetic demonstrations and the performance of ICL. \jing{We elaborate this Precision-Focused Iterative Radius Reduction technique in \Cref{sec:adaptive alg}.}

% \jing{In contrast to traditional DP 1-clustering methods \citep{nissim2016locating,nissim2018clustering} that require a large number of data points for accurate balls, our Precision-Focused Iterative Radius Reduction method 
% efficiently operates with a manageable number of points ($M$, typically 10-20). }\jingf{put this here is confusing. it seems not closely related to this step. besides, is the computation bottle neck of DP 1-clustering due to the dimensionality and data points (as you mentioned previously)? if $M$ is small, can it work? }

\subsection{AdaDPSyn Algorithm}\label{sec:adaptive alg}
We now introduce the proposed {AdaDPSyn} algorithm (\Cref{alg:adaptive DP few-shot generation}), which generates DP synthetic few-shot examples from the private dataset $\mathcal{D}_{\text{priv}}$. For a given label $\tilde{y}$ 
% \textcolor{red}{(Let's use $\tilde{y}$ to indicates this is the synthetic label)} 
randomly selected from the label set without replacement, AdaDPSyn sequentially generates one token at a time, starting from an empty list. Each token generation begins with the Next Token Generation subroutine (Line~\ref{algline:tang}). Roughly speaking, the Next Token Generation process \citep{tang2023privacy} involves sampling $MN$ examples with label $\tilde{y}$ from the private dataset \(\mathcal{D}_{\text{priv}}\), dividing them into $M$ disjoint subsets, and then using these subsets {as demonstrations to predict} %to build up 
the next token's {distribution} %probabilities 
$p_\text{priv}^{1},\ldots,p_\text{priv}^{M}$ using an LLM. The vocabulary is restricted to $\Sc$ containing the top-$K$ most probable tokens, which are determined from next-token probabilities generated only from instructions without using any private data. 
Details of Next Token Generation can be found in \Cref{appx:tang alg}. Then we privately aggregate %the next-token generation probability vectors 
$p_\text{priv}^{1},\ldots,p_\text{priv}^{M}$ using our {\it Precision-Focused Iterative Radius Reduction} technique (Lines~\ref{algline:desired radius}-\ref{algline:map2simplex}), which consists of four critical steps:

{\bf Step 1: Establish Target Radius Differentially Privately.} The process begins by setting a target radius $r$ using the GoodRadius subroutine \citep{nissim2016locating} in Line~\ref{algline:desired radius}. GoodRadius is a subroutine used in DP 1-clustering methods \citep{nissim2016locating,nissim2018clustering}, which identifies a radius that can encompass at least a fraction $\rho$ of the data points under $(\alpha,\tau_0)$-RDP constraints. In this work, we set $\rho=80\%$. Details are provided in \Cref{appx:dp radius}.

% so that it is sufficiently large to guarantee containment of all probability vectors within a ball of this size  \ruida{Initial $R$ is half of $l_2$ sensitivity in the probability simplex. It is not actually a radius, per se. The radius of the smallest $l_2$ ball containing the probabilty simplex is of radius $\sqrt{\frac{d}{d+1}}$ obtained from center $(1/d, \ldots, 1/d)$ to corner $(1, 0, \ldots, 0)$.}

% The center $c$ of this ball is calculated as the mean of the probability vectors $p_\text{priv}^{1},\ldots,p_\text{priv}^{M}$  (Line~\ref{algline:c mean}), providing a reference point for further adjustments.

{\bf Step 2: Projected Mean Estimation Using Gaussian DP Mechanism.} {
% Next, our objective is to identify a minimal ball with radius $R$ to cover most of the probability vectors differentially privately.} 
For a probability simplex in $\mathcal{R}^d$, its radius\footnote{For a bounded region $\mathcal{R} \subset \mathbb{R}^d$, we define its diameter as $\text{diam}(\mathcal{R}) = \sup_{x, y \in \mathcal{R}}\| x - y \|$, and its radius as  half of the diameter, i.e., $\text{Rad}(\mathcal{R}) = \frac{1}{2}\text{diam}(\mathcal{R})$. The definition of diameter coincides with the definition of sensitivity for the aggregation (summation) of a utility function, thus the radius can be interpreted as half of the sensitivity factor.} is $\frac{\sqrt{2}}{2}$. Thus, we set $R\gets \sqrt{2}/2$ (Line~\ref{algline:init R}) initially. We initialize projected points $\Tilde{p}_\text{priv}^i \gets p_\text{priv}^{i}, i = 1,\ldots, M$ (Line~\ref{algline:int_clipped points}), {and estimate mean of \(\{\Tilde{p}_\text{priv}^{1},\ldots,\Tilde{p}_\text{priv}^{M}\}\) using Gaussian DP mechanism (Line~\ref{algline:int_DP mean}), i.e., 
$\Tilde{p}_\text{priv} \gets \frac{1}{M} \left(\sum_{i=1}^M \Tilde{p}_\text{priv}^{i} + \mathcal{N}(0, 4R^2\sigma_1^2I)\right)$, where $\sigma_1$ is the noise multiplier. Then we map $\Tilde{p}_\text{priv}$ back onto the probability simplex, i.e., $\Tilde{p}_\text{priv} \gets  \frac{\max(\Tilde{p}_\text{priv}, 0)}{\|\max(\Tilde{p}_\text{priv}, 0)\|_1}$ (Line~\ref{algline:int_map2simplex}).
}

Next, we reduce $R$ towards the target radius $r$ iteratively, ensuring that it covers a sufficient amount of probability vectors while satisfying DP (Lines~\ref{algline:count cover}-\ref{algline:map2simplex}). During the iterative process, which takes at most \(\hat{T}\) iterations, each step starts with {a radius coverage check detailed in the following paragraph.}

{\bf Step 3: Radius Coverage Check Using Gaussian DP Mechanism.} We use Gaussian DP mechanism to check whether a sufficient number of the original probability vectors $p_\text{priv}^{1},\ldots,p_\text{priv}^{M}$ remain within a small radius from \(\Tilde{p}_\text{priv}\) (Line~\ref{algline:test cover}).
% To ensure that the adjusted data points $\Tilde{p}_\text{priv}^1 ,\ldots, \Tilde{p}_\text{priv}^M$ still adequately represent the original distribution,
% a radius coverage check is performed (Line~\ref{algline:test cover}). This step uses the Gaussian DP mechanism to verify that a sufficient number of the original probability vectors $p_\text{priv}^{1},\ldots,p_\text{priv}^{M}$ remain within an acceptable radius from \(\Tilde{p}_\text{priv}\). 
Specifically, the check evaluates whether the $\ell_2$-ball with center $\Tilde{p}_\text{priv}$ and radius $r +  \frac{2\lambda  R\sigma_1 \sqrt{K}}{M}$ can cover at least $\mu M$ of the points $p_\text{priv}^{1},\ldots,p_\text{priv}^{M}$, where %$r$ denotes the target radius, 
$\lambda$ is a hyperparameter, $\sigma_1$ scales the Gaussian noise introduced to $\Tilde{p}_\text{priv}$ in Line~\ref{algline:int_DP mean} or Line~\ref{algline:avg noise}, $K$ is the dimensionality of the probability vectors, and $\mu \in (0,1]$. We set $\mu = 0.55$ in this work. The term $\frac{2\lambda  R\sigma_1 \sqrt{K}}{M}$ represents an additional margin on the radius to accommodate the spread introduced by Gaussian noise added to $\Tilde{p}_\text{priv}$ in Line~\ref{algline:int_DP mean} or Line~\ref{algline:avg noise}, ensuring data coverage despite the distortion caused by the noise.

{\bf Step 4: Update Radius towards Target Radius.} If the radius coverage check in Line~\ref{algline:test cover} reveals that fewer than $\mu M$ vectors are within the defined radius, the algorithm will stop reducing the current radius $R$. Otherwise, the algorithm proceeds to adjust $R$ downward (Line~\ref{algline:test decrease}-\ref{algline:update R}), i.e., 
%\[
$R \gets r +   \frac{2\lambda R\sigma_1 \sqrt{K}}{M}.$
%\]
%where $\lambda_1$ is a hyperparameter. %, and $\sigma_1$ is the scaling factor for the Gaussian noise applied to $\Tilde{p}_\text{priv}$ in Line~\ref{algline:avg noise}. %, $K$ represents the dimensionality of the probability vectors, and $M$ is the total number of the vectors. 
Then we project next-token probability vectors \(\{p_\text{priv}^{1},\ldots,p_\text{priv}^{M}\}\) to a $\ell_2$-ball defined by radius \(R\) and centered at \(\Tilde{p}_\text{priv}\) (Line~\ref{algline:clip}), i.e.,
%\[
$\Tilde{p}_\text{priv}^{i} = \Tilde{p}_\text{priv} + {(p_\text{priv}^{i} - \Tilde{p}_\text{priv})}\Big/{\max\left(1, \frac{\|p_\text{priv}^{i} - \Tilde{p}_\text{priv}\|_2}{R}\right)},\,\, i = 1,\ldots,M. $
%\]
%where $p_\text{priv}^{1},\ldots,p_\text{priv}^{M}$ are probability vectors, $R$ is the radius and $c$ represents the center. 
This operation ensures that the $\ell_2$-sensitivity of the term $\sum_{i=1}^M \Tilde{p}_\text{priv}^{i}$ is at most $2R$.
%, reducing the amount of noise added to $\Tilde{p}_\text{priv}$ in Line~\ref{algline:avg noise}. 
{Finally, we estimate the mean of the projected points \(\{\Tilde{p}_\text{priv}^{1},\ldots,\Tilde{p}_\text{priv}^{M}\}\) using Gaussian DP mechanism (Line~\ref{algline:avg noise}), i.e., 
$\Tilde{p}_\text{priv} \gets \frac{1}{M} \left(\sum_{i=1}^M \Tilde{p}_\text{priv}^{i} + \mathcal{N}(0, 4R^2\sigma_1^2I)\right).$ As $R$ decreases, the amount of noise added to $\Tilde{p}_\text{priv}$ is reduced accordingly.
We then map $\Tilde{p}_\text{priv}$ back onto the probability simplex, i.e., $\Tilde{p}_\text{priv} \gets  \frac{\max(\Tilde{p}_\text{priv}, 0)}{\|\max(\Tilde{p}_\text{priv}, 0)\|_1}$ (Line~\ref{algline:map2simplex}).}

The process of radius refinement concludes under any of the following conditions: (1) the maximum allowed number of iterations, $\hat{T}$, is reached; (2) further reduction in radius fails to cover the minimum required proportion of vectors (Line~\ref{algline:test cover}); (3) the current radius $R$ is smaller than $r +   \frac{2\lambda_1 R\sigma_1 \sqrt{K}}{M}$ (Line~\ref{algline:test decrease}). Once the radius refinement process concludes, we select the next token for the synthetic sequence as $\arg\max_{j \in \Sc} \Tilde{p}_\text{priv}[j]$ where $\Sc$ is the vocabulary of size $K$, and this token is appended to what is generated so far. This continues until the sequence reaches the predefined maximum limit, $T_{\max}$, of tokens we would like to generate. 

\begin{remark}[Comparison with \citet{tang2023privacy}]
The main difference between our method and that of  \citet{tang2023privacy} lies in the innovative approach we take toward the private aggregation of the next-token generation probabilities $p_{\text{priv}}^{1},\ldots,p_{\text{priv}}^{M}$. Instead of directly using the Gaussian DP mechanism for aggregation as in \citet{tang2023privacy}, our approach introduces an advancement through the Precision-Focused Iterative Radius Reduction technique (Lines~\ref{algline:desired radius}-\ref{algline:map2simplex}). This technique allows noise to be added in a data-adaptive manner, effectively adapting the noise level to the degree of agreement among the models. Empirically, we evaluate the performance of our algorithm against \citet{tang2023privacy} in \Cref{sec:main exp} to demonstrate the effectiveness of our method.
\end{remark}

\begin{algorithm}[h!] 

\caption{AdaDPSyn}
\label{alg:adaptive DP few-shot generation}
\begin{algorithmic}[1]

\STATE   {\bf Input:} Private dataset: $\mathcal{D}_\text{priv}$, label: $\Tilde{y}$, maximum token count: $T_{\max}$, an LLM: $\mathrm{LLM}(\cdot)$, noise multiplier: $\sigma_0$, $\sigma_1$, $\sigma_2$, number of disjoint subsets: $M$, samples per subset: $N$, reduced vocabulary size: $K$, hyperparameters: $\lambda$ and $\hat{T}$.

\STATE {\bf Initialize:} Set $\Tilde{x} \gets []$.

\FOR{$t=1$ to $T_{\max}$}   

    \STATE $p_\text{priv}^{1},\ldots,p_\text{priv}^{M} \gets \mathrm{Next Token Generation}\left(\mathcal{D}_\text{priv}, \Tilde{y},\mathrm{LLM}(\cdot), M,N,K, \Tilde{x}\right)$\label{algline:tang}
    
    \STATE { $r\gets \mathrm{GoodRadius} (\{p_\text{priv}^{1},\ldots,p_\text{priv}^{M} \},\rho M,\sigma_0)$} \quad\,\quad\quad$\triangleright$ \texttt{Establish Target Radius}\label{algline:desired radius}
    
    \STATE { $R\gets \sqrt{2}/2$}\label{algline:init R}

    \STATE $\Tilde{p}_\text{priv}^i \gets p_\text{priv}^{i}, i = 1,\ldots, M$\label{algline:int_clipped points}

    {\STATE $\Tilde{p}_\text{priv} \gets \frac{1}{M} \left(\sum_{i=1}^M \Tilde{p}_\text{priv}^{i} + \mathcal{N}(0, 4R^2\sigma_1^2I)\right)$\label{algline:int_DP mean}}

    {\STATE $\Tilde{p}_\text{priv} \gets  \frac{\max(\Tilde{p}_\text{priv}, 0)}{\|\max(\Tilde{p}_\text{priv}, 0)\|_1}$\label{algline:int_map2simplex}}

    % \STATE $c\gets \frac{1}{M} \left(\sum_{i=1}^M {p}_\text{priv}^{i}\right)$\label{algline:c mean}

    \FOR{$j = 1$ to $\hat{T}$}

    \STATE $t\gets$ the count of $p_\text{priv}^{1},\ldots,p_\text{priv}^{M}$ within a $\ell_2$-ball centered at $\Tilde{p}_\text{priv}$ with radius $r +  \frac{2\lambda  R\sigma_1 \sqrt{K}}{M}$\label{algline:count cover}
    
    \IF{ $t + \mathcal{N}(0, \sigma_2^2) < \mu M $}\label{algline:test cover}

    \STATE \textbf{break}\quad \quad\quad\quad\quad \quad\quad\quad \quad\quad\quad\quad \quad\quad  \quad \quad\quad $\triangleright$ \texttt{DP Radius Coverage Check}

    \ENDIF
    \IF{$R < r +   \frac{2\lambda R\sigma_1 \sqrt{K}}{M}$}\label{algline:test decrease}

    \STATE \textbf{break}

    \ENDIF
    
    \STATE $R \gets r +   \frac{2\lambda R\sigma_1 \sqrt{K}}{M}$\quad\quad\quad\quad \quad$\triangleright$ \texttt{Update Radius towards Target Radius}\label{algline:update R}

    \STATE {$\Tilde{p}_\text{priv}^{i} \gets \Tilde{p}_\text{priv} + {(p_\text{priv}^{i} - \Tilde{p}_\text{priv})}\Big/{\max\left(1, \frac{\|p_\text{priv}^{i} - \Tilde{p}_\text{priv}\|_2}{R}\right)}, i = 1,\ldots,M$}  \label{algline:clip}   

    {\STATE $\Tilde{p}_\text{priv} \gets \frac{1}{M} \left(\sum_{i=1}^M \Tilde{p}_\text{priv}^{i} + \mathcal{N}(0, 4R^2\sigma_1^2I)\right)$ \quad  $\triangleright$ \texttt{DP Projected Mean Estimation}}\label{algline:avg noise}

    {\STATE $\Tilde{p}_\text{priv} \gets  \frac{\max(\Tilde{p}_\text{priv}, 0)}{\|\max(\Tilde{p}_\text{priv}, 0)\|_1}$\label{algline:map2simplex}}

    \ENDFOR

    \STATE $w \gets \arg\max_{j \in S} \Tilde{p}_\text{priv}[j]$\label{algline:argmax}

    %\IF{$w$ is $EOS$}
        %\STATE \textbf{return} $\Tilde{x}$
    %\ENDIF

    \STATE $\Tilde{x} \gets \Tilde{x} + [w]$\label{algline:append z}

\ENDFOR

\STATE \textbf{return} $\Tilde{x}$.

\end{algorithmic}

\end{algorithm}

\subsection{Privacy Analysis}
\vspace{-0.05in}

\begin{theorem}\label{thm:DP}
    \Cref{alg:adaptive DP few-shot generation} is $(\varepsilon , \delta)$-differentially private.
\end{theorem}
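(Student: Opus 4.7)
The plan is to establish $(\varepsilon,\delta)$-DP by working entirely in the R\'enyi Differential Privacy (RDP) framework: bound the RDP cost of each elementary privacy-sensitive primitive, compose these costs across the inner and outer loops and across the $n_{\text{shots}}$ demonstrations, apply subsampling amplification, and finally invoke the standard RDP-to-$(\varepsilon,\delta)$-DP conversion of \citet{mironov2017renyi}. Neighboring datasets differ in a single example, which can affect at most one of the $M$ disjoint subsets used in \texttt{NextTokenGeneration}, and thus at most one of the vectors $p_{\text{priv}}^{1},\ldots,p_{\text{priv}}^{M}$ in each token iteration.

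First I would isolate the four privacy-sensitive primitives invoked per outer-loop iteration: (i) the subsampling of $MN$ labeled examples inside \texttt{NextTokenGeneration}, contributing an amplification factor; (ii) the \texttt{GoodRadius} call in Line~\ref{algline:desired radius}, which is $(\alpha,\tau_0)$-RDP by \citet{nissim2016locating} for noise parameter $\sigma_0$; (iii) the projected-mean Gaussian mechanisms in Lines~\ref{algline:int_DP mean} and~\ref{algline:avg noise}, where the projection in Line~\ref{algline:clip} (together with the trivial initialization in Line~\ref{algline:int_clipped points}, valid since the probability simplex has radius $\sqrt{2}/2 = R$ at the start) forces $\sum_i \tilde p^{i}_{\text{priv}}$ to have $\ell_2$-sensitivity at most $2R$, which matched against Gaussian noise of standard deviation $2R\sigma_1$ yields a data-independent noise multiplier $\sigma_1$; and (iv) the coverage-check Gaussian in Line~\ref{algline:test cover}, a Gaussian mechanism on an integer count of sensitivity $1$ with standard deviation $\sigma_2$. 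Composition then proceeds in three layers. Inner loop: each of the at most $\hat T$ iterations of the $j$-loop spends one coverage-check Gaussian plus one projected-mean Gaussian, whose RDP costs add. Token loop: combining the initial projected-mean step, the \texttt{GoodRadius} cost, and the $\hat T$ inner iterations, then composing over $T_{\max}$ tokens and applying the RDP subsampling-amplification lemma, yields the per-demonstration RDP cost. Demonstration loop: composing over $n_{\text{shots}}$ demonstrations gives the total RDP cost, which is then converted to $(\varepsilon,\delta)$-DP; the noise multipliers $\sigma_0,\sigma_1,\sigma_2$ are tuned so that this aggregate bound is at most $(\varepsilon,\delta)$.

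The main obstacle will be the \emph{data-adaptive} character of the inner loop, since both its stopping time and the running radius $R$ depend on the private data. Three observations resolve this. First, because the Gaussian noise in Lines~\ref{algline:int_DP mean} and~\ref{algline:avg noise} is scaled as $2R\sigma_1$, the noise multiplier remains fixed at $\sigma_1$ for every realization of $R$, so the per-call RDP cost is data-independent. Second, early termination of the inner loop only reduces the number of Gaussians actually executed, so it suffices to upper-bound the privacy loss by the worst case that runs all $\hat T$ iterations for each of the $T_{\max}$ tokens. Third, since $R$ is consumed only by downstream mechanisms that are themselves RDP, the adaptive use of $R$ is already covered by adaptive RDP composition with no extra penalty. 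After these reductions the argument collapses to a bookkeeping computation of the cumulative RDP order, which would be deferred to the appendix.
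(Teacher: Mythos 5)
Your proposal matches the paper's proof in \Cref{appx:proof DP}: it decomposes each token iteration into the same three RDP primitives (GoodRadius, projected-mean Gaussian, coverage-check Gaussian), composes them sequentially with the worst-case count $\tau_0 + (\hat T+1)\tau_1 + \hat T\tau_2$, applies subsampling amplification from the $MN$-sample draw in \texttt{NextTokenGeneration}, composes across $T_{\max}$ tokens, and converts the aggregate RDP bound to $(\varepsilon,\delta)$-DP. The only minor deviations are that you add one more composition layer over $n_{\text{shots}}$ demonstrations (the paper's theorem scopes \Cref{alg:adaptive DP few-shot generation} to generating a single demonstration, so its proof stops at the $T_{\max}$ composition) and you invoke Mironov's RDP-to-DP conversion where the paper uses the sharper bound of \citet{balle2020hypothesis}; your explicit remarks about why the data-adaptive radius and early stopping do not break RDP bookkeeping are the same justifications the paper leaves implicit.
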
\vspace{-0.3in}

\begin{proof}[Proof Overview]
We adopt the commonly used Rényi differential privacy (RDP) \citep{mironov2017renyi} to track the privacy cost in our algorithm, as it allows us to tightly quantify the privacy guarantees from the composition of multiple mechanisms. Our proof mainly consists of three steps. First, we show that each iteration of the algorithm is $(\alpha,\tau)$-RDP. In each iteration, our algorithm consists of 3 components: one instance of $(\alpha,\tau_0)$-RDP algorithm GoodRadius (Line~\ref{algline:desired radius}), at most $(\hat{T}+1)$ instances of $(\alpha,\tau_1)$-RDP projected mean estimation using Gaussian DP Mechanism (Line~\ref{algline:int_DP mean} and Line~\ref{algline:avg noise}) and at most $\hat{T}$ instances of $(\alpha,\tau_2)$-RDP radius coverage check using Gaussian DP Mechanism (Line~\ref{algline:test cover}). We apply the composition property of RDP to ensure the privacy guarantee in each iteration, that is, we ensure that $\tau = \tau_0 + (\hat{T}+1)\tau_1 + \hat{T}\tau_2$. Next, since the Next Token Generation subroutine (Line~\ref{algline:tang}) involves drawing $MN$ samples from the dataset $\mathcal{D}_\text{priv}$, there is a privacy amplification by subsampling at each iteration. Finally, we compose privacy loss across all the $T_{\max}$ iterations of our algorithm using composition theorems. 
We rely on the conversion lemma \citep{balle2020hypothesis} to convert the RDP guarantee back to DP notions.
We provide a full proof in \Cref{appx:proof DP}.
\end{proof}

\section{Experiments}\label{sec:exp}
\vspace{-0.05in}
%In our experiments, we use the following setup similar to that in \citet{tang2023privacy}. 

{\bf Datasets.} 
We study text classification on three datasets: 4-way news classification AGNews \citep{zhang2015character}, 6-way question classification TREC \citep{voorhees2000building}, and 14-way topic classification DBPedia \citep{zhang2015character}. For information extraction, we study the MIT Movies trivia10k13 slot-filling dataset \citep{liu2012conversational}, which includes movie genre (MIT-G) and director name (MIT-D) as slots. Additional dataset details are available in \Cref{appx:dataset}.

{\bf Setups.} 
We use the synthetic few-shot demonstrations $\left\{\widetilde{z}_1, \cdots, \widetilde{z}_{n_{\text {shots}}}\right\}$ from \Cref{alg:adaptive DP few-shot generation} as input demonstrations in ICL for the aforementioned downstream tasks. We fix $n_\text{shots}=4$ by default, generating 4-shot demonstrations (randomly without replacement from the label set) for ICL. We use Llama-2-7b-hf model \citep{touvron2023llama} as our pre-trained LLM to generate synthetic demonstrations\footnote{Next Token Generation \citep{tang2023privacy} uses OpenAI's API logprobs parameters with a value of 100, which currently supports a maximum value of 5. To overcome this, we use the Llama-2-7b-hf model via vLLM platform (\url{https://github.com/vllm-project/vllm}), which offers a compatible API, as suggested in \citet{tang2023privacy}.}. We use the same prompt format during ICL following \citet{tang2023privacy} (see \Cref{appx:prompts}). 
For ICL downstream tasks, we use Llama-2-7b-hf for AGNews, DBPedia, MIT-G, and MIT-D, and GPT-4o mini for TREC\footnote{When using Llama-2-7b-hf for TREC, we observe high standard deviation in the results (around 10), likely due to the complexity of the task. GPT-4o mini reduces the standard deviation to around 5.}.
For DP algorithms, we follow the common practice to set the privacy budget $\delta=1 /\left|\mathcal{D}_{\text {priv}}\right|$ \citep{tang2023privacy,hong2023dp}.

{\bf Baselines.} We compare our AdaDPSyn algorithm with DP few-shot generation in \citet{tang2023privacy}, which uses a data-independent approach of adding the same level of Gaussian noise during aggregation. {Hypermarameters in DP few-shot generation are detailed in \Cref{tab:tang}, including the number of subsets $M$, the number of data samples per subset $N$, the number of tokens $T_\text{max}$, and the reduced vocabulary size $K$. To ensure a fair comparison, we select these parameters for DP few-shot generation based on the guidance in \citet{tang2023privacy}, as detailed in  \Cref{appx:hyperparameter}. We then use the same values of $M$, $N$, $T_\text{max}$ and $K$ for our AdaDPSyn without any additional hyperparameter tuning.}

% Specifically, $K$ is held constant at 100, and a grid search is performed to determine the optimal values for $M$ and $N$ ($N=[1,2,4]$, $MN = [20,40,80]$). 

As a fully private baseline $\varepsilon = 0$, we follow \citet{tang2023privacy} to generate synthetic 4-shot demonstrations using purely instructions without any private data\footnote{\citet{tang2023privacy} demonstrate that in some applications, LLMs can generate relevant few-shot demonstrations using only instructions and perform competitively with DP few-shot generation algorithm.}.
We consider two non-private baselines for $\varepsilon = \infty$: (i) DP few-shot generation algorithm operates without any added noise; (ii) 4-shot demonstrations randomly selected from the private dataset. We present the best of two accuracies between the two baselines. Specifically, (i) performs better for AGNews, TREC, and MIT-G, while (ii) performs better for DBPedia and MIT-D, as detailed in \Cref{tab:non-priv}.

It is noteworthy that, on the AGNews task, the fully private baseline $\varepsilon = 0$ outperforms baseline DP few-shot generation with higher $\varepsilon$ values. This aligns with findings from \citet{tang2023privacy} (Table 1 in \citet{tang2023privacy}\footnote{
In \citet{tang2023privacy}, Table 1 shows that for AGNews using GPT-3 Babbage, the $\varepsilon = 0$ result outperforms DP few-shot generation at $\varepsilon=1,2$.}), which highlights that LLMs are capable of generating meaningful demonstrations purely from instructions in some applications, performing well even without private data. Meanwhile, AdaDPSyn outperforms the $\varepsilon = 0$ baseline at $\varepsilon = 1,2,4,8$, showing the advantage of our data-adaptive design.%\jingf{shall we emphasize that AdaDPSyn result is not worse than $\epsilon=0$ result, furthur indicating the advantage of data-adaptive design in 5.1? }

%\jingf{why the two results when $\varepsilon=0$ are better than tang's results with higher $\varepsilon$?}

% The $\varepsilon = \infty$ case shows the inherent capabilities of the sample-and-aggregate framework in few-shot learning and allows us to observe the full performance potential of the algorithm without the influence of DP constraints. 

\subsection{Main Results}\label{sec:main exp}

\begin{table*}[t!]
\vspace{-0.4in}
    \caption{\small 4-shot ICL performance on the test set of downstream tasks with baselines. Results show the mean and standard deviation of accuracy over 5 runs. Our private solution with various privacy levels $\varepsilon = 1,2,4,8$ uses AdaDPSyn to generate DP synthetic few-shot demonstrations for ICL, in comparison with the DP few-shot generation algorithm \citep{tang2023privacy}. \(\varepsilon = 0\) represents a fully private solution. $\varepsilon = \infty$ serves as the non-private baseline. }\vspace{-0.1in}
    \label{tab:comp}
    \begin{center}
    \renewcommand{\arraystretch}{1.2}
    \adjustbox{max width= \linewidth}{
    \begin{tabular}{cccccccc}
    \Xhline{3\arrayrulewidth}
    Dataset & Method & $\varepsilon = 0$ & $\varepsilon = 1$ & $\varepsilon = 2$ & $\varepsilon = 4$ & $\varepsilon = 8$ & $\varepsilon = \infty$\\
    \Xhline{2\arrayrulewidth}

    \multirow{2}{*}{AGNews}
    & DP few-shot generation & \multirow{2}{*}{${61.38}_{7.10}$} & ${60.74}_{3.09}$ & ${62.64}_{1.82}$ & ${62.70}_{1.30}$ & ${63.18}_{1.26}$ & \multirow{2}{*}{${65.92}_{1.13}$}\\
    & {\bf AdaDPSyn} & &{$\textbf{65.42}_{1.08}$} & {$\textbf{65.52}_{0.81}$} & {$\textbf{65.84}_{1.03}$} & {$\textbf{65.92}_{1.02}$} \\
\Xhline{1\arrayrulewidth}
    \multirow{2}{*}{DBPedia}
    & DP few-shot generation & \multirow{2}{*}{${63.92}_{3.07}$}& ${64.68}_{1.42}$ & ${64.78}_{1.29}$ & ${64.92}_{1.99}$ & ${65.26}_{2.12}$ & \multirow{2}{*}{${69.06}_{1.40}$}\\
    & {\bf AdaDPSyn} & &{$\textbf{66.76}_{1.45}$} & {$\textbf{67.48}_{1.54}$} & {$\textbf{67.12}_{1.09}$} & {$\textbf{67.70}_{1.68}$} \\
% \Xhline{1\arrayrulewidth}
%     \multirow{2}{*}{TREC}
%     & DP few-shot generation& \multirow{2}{*}{${20.12}_{6.40}$} & ${34.56}_{12.16}$ & ${35.24}_{11.22}$ & ${35.60}_{12.18}$ & ${39.16}_{10.75}$ & \multirow{2}{*}{${63.00}_{9.78}$}\\
%     & {\bf AdaDPSyn} & &{$\textbf{36.88}_{8.46}$} & {$\textbf{41.24}_{9.83}$} & {$\textbf{41.88}_{13.98}$} & {$\textbf{50.52}_{8.65}$} \\
\Xhline{1\arrayrulewidth}
    \multirow{2}{*}{TREC}
    & DP few-shot generation& \multirow{2}{*}{$59.40_{2.62}$}  & ${62.48}_{9.38}$ & ${65.28}_{5.67}$ & ${66.12}_{5.46}$ & ${66.52}_{5.57}$ & \multirow{2}{*}{ $72.32_{2.76}$}\\
    & {\bf AdaDPSyn} & &{$\textbf{69.32}_{7.19}$} & {$\textbf{69.80}_{4.57}$} & {$\textbf{71.48}_{2.93}$} & {$\textbf{72.08}_{2.24}$} \\
    
\Xhline{1\arrayrulewidth}
    \multirow{2}{*}{MIT-G}
    & DP few-shot generation & \multirow{2}{*}{${13.62}_{6.51}$}& ${31.15}_{4.04}$ & ${34.00}_{5.13}$ & ${35.85}_{3.63}$ & ${36.10}_{5.52}$ & \multirow{2}{*}{${43.23}_{7.00}$}\\
    & {\bf AdaDPSyn} & &{$\textbf{37.59}_{5.40}$} & {$\textbf{37.41}_{4.36}$} & {$\textbf{37.85}_{4.25}$} & {$\textbf{38.31}_{5.03}$} \\
\Xhline{1\arrayrulewidth}

    \multirow{2}{*}{MIT-D}
    & DP few-shot generation& \multirow{2}{*}{${52.29}_{9.43}$} & ${72.87}_{3.48}$ & ${73.11}_{1.86}$ & ${74.60}_{2.29}$ & ${75.18}_{3.74}$ & \multirow{2}{*}{${79.33}_{1.48}$}\\
    & {\bf AdaDPSyn} & &$\textbf{73.35}_{2.13}$ & $\textbf{74.41}_{2.55}$ & $\textbf{75.08}_{1.30}$ & $\textbf{75.28}_{1.93}$ \\
    
    \Xhline{3\arrayrulewidth}
    \end{tabular}
    }
    \end{center}
    
    \begin{center}
        \scriptsize

	\end{center} 
	\vspace{-0.3in}
		
\end{table*}

We present our main results in \Cref{tab:comp}. We provide the mean and standard deviation of the accuracy on the test data with ICL over 5 runs with different random seeds. In general, our results demonstrate that AdaDPSyn outperforms DP few-shot generation \citep{tang2023privacy} across various privacy settings, while also closely approximating the performance of the non-private baseline.

% As a non-private baseline, the $\varepsilon = \infty$ case represents the DP few-shot generation algorithm running without any noise added. It shows the inherent capabilities of the sample-and-aggregate framework in few-shot learning and allows us to observe the full performance potential of the algorithm without the influence of DP constraints. 

We observe that, compared with DP few-shot generation, AdaDPSyn provides gains at privacy levels $\varepsilon = 1,2,4,8$. For instance, in the AGNews news classification task at $\varepsilon = 1$, our method achieves an accuracy of $65.42\%$ compared to $60.74\%$ under DP few-shot generation. In the DBPedia topic classification, we reach $66.76\%$, surpassing the previous $64.68\%$. Moreover, in the information extraction tasks, our performance improvements are also evident with MIT-G achieving $37.59\%$ over $31.15\%$ at $\varepsilon = 1$. This highlights our method's efficiency in adapting noise addition to the degree of consensus among model outputs, improving accuracy for ICL.

Additionally, our method closely approaches the performance of the non-private baseline $\varepsilon = \infty$ on several tasks. Specifically, for news classification AGNews with a strict privacy level as small as $\varepsilon = 1$, our method achieves an accuracy of $65.42\%$, just slightly below the non-private baseline's accuracy of $65.92\%$. For topic classification DBPedia at $\varepsilon = 1$, we record an accuracy of $66.76\%$, near the baseline's $69.06\%$. In the question classification task TREC at $\varepsilon = 1$, our approach reaches $69.32\%$, closely following the baseline's $72.32\%$. %\jingf{not updated yet?}

We conduct a hyperparameter search for AdaDPSyn on $\lambda$ and $\hat{T}$ in \Cref{appx:hyperparameter_search}. We observe that ICL accuracy remains stable across different \(\lambda\) and \(\hat{T}\) settings. For example, on AGNews, AdaDPSyn achieves accuracy between $64.28\%$ and $65.92\%$ for $\hat{T}\in\{1, 2\}$ and $\lambda\in\{0.15, 0.2, 0.25\}$, compared to $63.18\%$ for DP few-shot generation when \(\varepsilon = 8\). This highlights the stability of \Cref{alg:adaptive DP few-shot generation}. We provide the parameters used for our main results in \Cref{appx:hyperparameter}.

\subsection{Ablation Studies}\label{sec:abla}

{\bf Varying number of shots.} Using the DBPedia dataset and setting $\varepsilon=4$, we test the number of shots for ICL with $n_\text{shots} = 1,2,4,8$. The results are presented in \Cref{tab:n_shot}. \(\varepsilon = 4\) (AdaDPSyn) presents the
performance of our private solution and \(\varepsilon = 4\) (DP few-shot generation) is based on the solution in \citet{tang2023privacy}. We present two non-private performances: (i) DP few-shot generation algorithm operates without any added noise; (ii) few-shot demonstrations randomly selected from the private dataset.

We observe that AdaDPSyn consistently outperforms DP few-shot generation across all $n$-shot settings. Additionally, increasing the number of shots can improve the performance of AdaDPSyn. This suggests that AdaDPSyn can benefit from larger $n$-shot scenarios.

\begin{table*}[h!]
\vspace{-0.1in}
    \caption{\small ICL performance on the test set of the DBPedia dataset with various number of shots. \(\varepsilon = 4\) (AdaDPSyn) uses AdaDPSyn to generate DP synthetic few-shot demonstrations for ICL. \(\varepsilon = 4\) (DP few-shot generation) is based on the solution in \citep{tang2023privacy}. $\varepsilon = \infty$ (\Cref{alg:DP few-shot generation} with $\sigma=0$) represents DP few-shot generation algorithm operating without any added noise. $\varepsilon = \infty$ (rand) represents using demonstrations randomly selected from private dataset. Results show the mean and standard deviation of accuracy over 5 runs.}\vspace{-0.1in}
    \label{tab:n_shot}
    \begin{center}
    \renewcommand{\arraystretch}{1.2}
    \adjustbox{max width= \linewidth}{
    \begin{tabular}{lcccc}
    \Xhline{3\arrayrulewidth}
     & $n_\text{shots} = 1$ & $n_\text{shots} = 2$ & $n_\text{shots} = 4$ & $n_\text{shots} = 8$\\ % & $n_\text{shots} = 12$\\
    \Xhline{2\arrayrulewidth}
    \(\varepsilon = 4\) (AdaDPSyn) & ${65.94}_{0.27}$ & ${67.00}_{1.00}$ & ${67.12}_{1.09}$ & ${67.55}_{1.27}$\\% & ${67.83}_{1.08}$\\
    \(\varepsilon = 4\) (DP few-shot generation) & ${64.44}_{1.55}$ & ${65.06}_{1.98}$ & ${64.92}_{1.99}$ & ${66.32}_{1.57}$\\ % & ${66.68}_{1.48}$\\
    $\varepsilon = \infty$ (\Cref{alg:DP few-shot generation} with $\sigma=0$) & ${67.58}_{0.77}$ & ${67.82}_{0.34}$ & ${67.88}_{1.75}$ & ${68.16}_{1.61}$ \\%& ${68.32}_{1.52}$\\
    $\varepsilon = \infty$ (rand) & ${69.02}_{2.21}$ & ${69.38}_{2.66}$ & ${69.06}_{1.40}$  & ${69.68}_{1.29}$ \\%& ${69.24}_{1.64}$\\
    
    \Xhline{3\arrayrulewidth}
    \end{tabular}
    }
    \end{center}
    
    \begin{center}
        \scriptsize

	\end{center} 
	\vspace{-0.2in}
		
\end{table*}

{\bf Varying LLMs.} We evaluate ICL performance across different LLMs and present the results in \Cref{tab:model_size}. Specifically, we use the generated examples from \Cref{sec:main exp} as demonstrations with GPT-3.5 Turbo and GPT-4o mini, available through OpenAI’s service, for downstream ICL tasks. The experiments are conducted on the DBPedia dataset.

We observe that AdaDPSyn consistently outperforms DP few-shot generation and performs comparably to non-private baselines. Performance improves with both GPT-4o mini and GPT-3.5 Turbo for all models, with GPT-4o mini showing the highest overall results. Our method remains open to further improvements with more advanced LLMs.

% performance improves with larger model sizes, with GPT-4 consistently outperforming GPT-3.5 in all settings. Our methed is open to improvements with more advanced LLMs.

\begin{table*}[h!]
\vspace{-0.1in}
    \caption{\small ICL performance on test set of DBPedia dataset with various LLMs. \(\varepsilon = 4\) (AdaDPSyn) uses AdaDPSyn to generate DP synthetic few-shot demonstrations for ICL. \(\varepsilon = 4\) (DP few-shot generation) is based on the solution in \citep{tang2023privacy}. $\varepsilon = \infty$ (Alg~\ref{alg:DP few-shot generation}, $\sigma=0$) represents DP few-shot generation algorithm operating without any added noise. $\varepsilon = \infty$ (rand) represents using demonstrations randomly selected from private dataset.}\vspace{-0.1in}
    \label{tab:model_size}
    \begin{center}
    \renewcommand{\arraystretch}{1.2}
    \adjustbox{max width= \linewidth}{
    \begin{tabular}{ccccc}
    \Xhline{3\arrayrulewidth}
     Model & \(\varepsilon = 4\) (DP few-shot generation) & \(\varepsilon = 4\) (AdaDPSyn) & $\varepsilon = \infty$ (Alg~\ref{alg:DP few-shot generation}, $\sigma=0$) & $\varepsilon = \infty$ (rand)\\
    \Xhline{2\arrayrulewidth}

    LLama-2-7B-hf  & $64.92_{1.99}$ & $67.12_{1.09}$&${67.88}_{1.75}$ & ${69.06}_{1.40}$ \\

    GPT-3.5 Turbo & $89.94_{0.85}$  &  $91.92_{1.46}$& $92.16_{1.19}$ & $93.26_{0.45}$ \\

    GPT-4o mini & $91.40_{1.09}$  & $93.08_{1.09}$ & $93.28_{1.24}$ &  $94.32_{0.35}$\\

    % GPT-3.5 Turbo&  & 91.3 & 91.8 & 93.1 \\

    % GPT-4 Turbo&  & 93.9 & 94.0 & 95.0 \\

    % GPT-4o &  & 94.5 & 94.6 & 96.4 \\
    
    \Xhline{3\arrayrulewidth}
    \end{tabular}
    }
    \end{center}
    
    \begin{center}
        \scriptsize

	\end{center} 
	\vspace{-0.2in}
		
\end{table*}

{\bf {Performance against membership inference attacks (MIAs)}.}
Additionally, we conduct an empirical privacy evaluation in \Cref{appx:MIAs} using membership inference attacks (MIA) \citep{shokri2017membership,duan2023flocks}. Our results show that, in contrast to non-private ICL, which suffers from significant membership privacy leakage, our AdaDPSyn algorithm effectively reduces the MIA success rate to the level of random guessing.

\section{Conclusion}

\vspace{-5pt}
In this work, we introduced the AdaDPSyn algorithm, a novel approach to ICL that incorporates DP to safeguard sensitive data used in LLM prompts. By leveraging a data-adaptive noise addition strategy through our Precision-Focused Iterative Radius Reduction technique, we effectively reduced noise levels without compromising DP guarantees, thus maintaining higher accuracy for ICL. Our empirical results demonstrate the superior performance of AdaDPSyn, which nearly matches the non-private baseline's performance on several benchmarks. 
One limitation of this work is the lack of theoretical guarantees for utility, as it is difficult to find an appropriate closed-form expression of ICL accuracy. We leave this exploration to future work.

\newpage

\section*{Acknowledgments}
The work of F. Gao and J. Yang was supported in part by the U.S. National Science Foundation under the grant ECCS-2133170. The work of C. Shen was supported in part by the U.S. National Science Foundation under the grants CNS-2002902, ECCS-2029978, ECCS-2143559, CPS-2313110, and ECCS-2332060. The work of T. Wang was supported in part by the U.S. National Science Foundation under the grant CNS-2220433.

\bibliography{ref}
\bibliographystyle{apalike}
\newpage
\appendix
\section{DP Algorithms}\label{appx:algs used}

In this section, we list all the differentially private mechanisms that are used in \Cref{alg:adaptive DP few-shot generation} and provide their guarantees.

\subsection{DP few-shot generation}\label{appx:tang alg}

The DP few-shot generation algorithm \citep{tang2023privacy} is detailed in \Cref{alg:DP few-shot generation}. The Next Token Generation subroutine \citep{tang2023privacy} is presented in \Cref{alg:tang next token}.

\begin{algorithm}[h!] 
\caption{DP few-shot generation \citep{tang2023privacy}}
\label{alg:DP few-shot generation}
\begin{algorithmic}[1]

\STATE   {\bf Input:} Private dataset: $\mathcal{D}_\text{priv}$, label: $\Tilde{y}$, max number of tokens to generate: $T_\text{max}$, a pre-trained LLM: $\mathrm{LLM}(\cdot)$, number of disjoint subsets of private data: $M$, number of data samples in each subset: $N$, reduce vocab publicly with top-$K$, noise multiplier: $\sigma$.

\STATE   {\bf Algorithm used:} \Cref{alg:tang next token} for generating next token probilities from LLM.

\STATE {\bf Initialize:} Set $\Tilde{x} \gets []$.

\FOR{$t=1$ to $T_{max}$}   

    \STATE $p_\text{priv}^{1},\ldots,p_\text{priv}^{M} \gets \mathrm{Next Token Generation}\left(\mathcal{D}_\text{priv}, \Tilde{y},\mathrm{LLM}(\cdot), M,N,K, \Tilde{x}\right)$

    \STATE ${p}_\text{priv} \gets \frac{1}{M} \left(\sum_{i=1}^M {p}_\text{priv}^{i} + \mathcal{N}(0, 2\sigma^2 I)\right)$

    \STATE $w \gets \arg\max_{j \in S} {p}_\text{priv}[j]$

    %\IF{$w$ is $EOS$}
        %\STATE \textbf{return} $\Tilde{x}$
    %\ENDIF

    \STATE $\Tilde{x} \gets \Tilde{x} + [w]$

\ENDFOR   

\STATE \textbf{return} $\Tilde{x}$

\end{algorithmic}
\end{algorithm}

\begin{algorithm}[h!] 
\caption{Next Token Generation \citep{tang2023privacy}}
\label{alg:tang next token}
\begin{algorithmic}[1]

\STATE   {\bf Input:} Private dataset: $\mathcal{D}_\text{priv}$, label: $\Tilde{y}$, a pre-trained LLM: $\mathrm{LLM}(\cdot)$, number of disjoint subsets of private data: $M$, number of data samples in each subset: $N$, reduce vocab publicly with top-$K$, generated tokens $\Tilde{x}$.

\STATE $\mathcal{D}_\text{priv}^{\prime} \gets$ randomly draw $MN$ samples from $\mathcal{D}_\text{priv}$ with label $\Tilde{y}$

\STATE $p \gets \mathrm{LLM}(\cdot | PB(instruction, \Tilde{y}, \Tilde{x}))$ where $PB(\cdot)$ is defined in \Cref{appx:prompts}

\FOR{$i=1$ to $M$}

    \STATE $\mathcal{D}_\text{priv}^{(i)} \gets \mathcal{D}_\text{priv}^{\prime}[(i-1)N: iN]$

    \STATE $p_\text{priv}^{i} \gets$ $\mathrm{LLM}(\cdot | PB(instruction, \mathcal{D}_\text{priv}^{(i)}, \Tilde{y}, \Tilde{x}))$

    \STATE $S \gets$ top-$K$ indices of $p$
        
    \STATE $p_\text{priv}^{i}[\mathcal{V} \setminus S] \gets 0$ and re-scale $p_\text{priv}^{i}$ s.t. $\sum {p_\text{priv}^{i}} [S] = 1$

\ENDFOR
\STATE \textbf{return} $p_\text{priv}^{1},\ldots,p_\text{priv}^{M}$

\end{algorithmic}
\end{algorithm}

\subsection{GoodRadius}\label{appx:dp radius}

\citet{nissim2016locating} propose the GoodRadius algorithm (\Cref{alg:radius}) and provide its DP analysis. We adapt this with some modifications to provide an RDP guarantee. For readability, we include the RDP analysis here.

\begin{theorem}[\citet{nissim2016locating}]\label{thm:DP radius}
    GoodRadius (\Cref{alg:radius}) is $(\alpha,\tau_0)$-RDP.
\end{theorem}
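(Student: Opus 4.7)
The plan is to decompose GoodRadius into its constituent private subroutines, bound the Rényi divergence contributed by each, and then invoke the adaptive composition rule for RDP to obtain the global guarantee $\tau_0$. At a high level, GoodRadius of \citet{nissim2016locating} traverses a logarithmically-spaced collection of candidate radii (powers of two between some minimum and the ambient diameter) and, for each candidate, issues a counting query that reports how many of the input probability vectors $p_\text{priv}^1,\ldots,p_\text{priv}^M$ lie in a ball of that radius. Based on the noisy counts, it selects and returns a radius that covers at least the requested fraction of points. Thus the algorithm is fundamentally a sequence of count queries followed by a (possibly private) selection.

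First I would observe that each count query has $\ell_1$-sensitivity exactly $1$: replacing a single point in the private dataset can alter the count of points falling inside any fixed ball by at most one. Instantiating each such query with the Gaussian mechanism of scale $\sigma_0$ (as the noise multiplier suggests, consistent with the other Gaussian components of AdaDPSyn) yields an $(\alpha,\alpha/(2\sigma_0^2))$-RDP release per query, by the well-known RDP of the Gaussian mechanism. If GoodRadius issues $L$ such queries (where $L$ is logarithmic in the ratio of the maximum to minimum candidate radius, and is independent of the data), then by the sequential composition theorem for RDP we obtain an overall bound of $L\alpha/(2\sigma_0^2)$ on the RDP parameter for the sequence of noisy counts.

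Next I would account for the final selection step: the algorithm returns a radius determined by the noisy counts. Because this decision is a post-processing function of already-released RDP outputs, it incurs no additional privacy cost by the post-processing invariance of RDP. Summing the per-query contributions and absorbing any constants into the definition of $\tau_0$, we obtain an $(\alpha,\tau_0)$-RDP guarantee with $\tau_0$ of the form $L\alpha/(2\sigma_0^2)$ (or the appropriate closed form specified in \Cref{appx:dp radius}).

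The main obstacle I anticipate is the adaptive structure of the queries: each noisy count is used to decide what to do next, so one must appeal to the adaptive version of RDP composition rather than the simpler non-adaptive form. A related subtlety is whether GoodRadius, as stated in \Cref{alg:radius}, uses a stability-based or threshold-based release (such as sparse-vector or propose-test-release) for choosing the final radius; if so, the RDP of that specific mechanism—rather than plain Gaussian composition—must be substituted into the bookkeeping. Either way, the argument reduces to composing finitely many $(\alpha,\cdot)$-RDP operations with known sensitivity, and the theorem follows by choosing $\tau_0$ large enough to dominate this sum.
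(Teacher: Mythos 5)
Your high-level strategy — decompose GoodRadius into a bounded number of noisy counting-style queries, bound each query's Rényi divergence via the Gaussian mechanism, compose, and then treat the final selection as post-processing — is exactly the skeleton of the paper's proof. But several of the concrete details don't match \Cref{alg:radius} and \Cref{alg:bin} as actually written, and one of them is not merely cosmetic.

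The query being noised is not a raw ball count $\mathcal{B}_r(p)$ of sensitivity $1$. It is the utility
$L(r) = \frac{1}{t} \max_{\text{distinct } i_1,\ldots,i_t} \left\{ \bar{\mathcal{B}}_r(x_{i_1}) + \cdots + \bar{\mathcal{B}}_r(x_{i_t}) \right\}$
from Line~\ref{algline:L_r}, whose $\ell_1$-sensitivity is $2$ (the value carried over from \citet{nissim2016locating}); this is why the Gaussian noise is $\mathcal{N}(0,4\sigma_0^2)$ rather than $\mathcal{N}(0,\sigma_0^2)$. If you instantiate the Gaussian RDP bound with sensitivity $1$ you will calibrate $\sigma_0$ to the wrong scale. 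Second, the paper's GoodRadius does not sweep a geometric grid of candidate radii; it runs a \emph{binary search} (\Cref{alg:bin}) with a tolerance $\theta$, so the number of rounds is $\lceil \log_2(\sqrt{2}/(2\theta)) \rceil$, and — importantly — \emph{each} round releases two noisy evaluations, $\tilde{L}(r_{\text{mid}})$ and $\tilde{L}(r_{\text{mid}}/2)$, not one. That factor of two is why the per-release budget is $\tau_0 / \left(2\lceil \log_2(\sqrt{2}/(2\theta)) \rceil\right)$ and why $\sigma_0$ is set to $\sqrt{\alpha \lceil \log_2(\sqrt{2}/(2\theta)) \rceil / \tau_0}$. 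Third, the subtlety you flagged about a sparse-vector or propose-test-release style stopping rule does not arise: the paper's variant is purely Gaussian releases followed by a data-independent binary-search update and a final post-processing of the noisy values. So your plan does reduce correctly to RDP sequential composition of Gaussian mechanisms plus post-processing invariance, but you need to (i) use sensitivity $2$ for $L$, (ii) count two queries per binary-search round, and (iii) drop the hypothesis about a threshold-based release; with those corrections the accounting collapses to $\tau_0 = 2\lceil \log_2(\sqrt{2}/(2\theta)) \rceil \cdot \tau_0/\left(2\lceil \log_2(\sqrt{2}/(2\theta)) \rceil\right)$ as in the paper.
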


\begin{proof}
From \citet{nissim2016locating}, the sensitivity of $L$ defined in Line~\ref{algline:L_r} of \Cref{alg:radius} is 2. Then in Line~\ref{algline:DP_binary_search}, we search for an $r$ such that $L(r) \geq t$ and $L(r/2) < t$ while maintaining differential privacy. This is achieved by using binary search with noisy estimates of $L$ for the comparisons. Specifically, we introduce a tolerance parameter $\theta$, which determines the precision of the binary search. The number of iterations for the binary search is at most $\lceil \log_2\left(\frac{\sqrt{2}}{2\theta}\right) \rceil
$. In each iteration, we apply Gaussian mechanism to estimate $L(r)$ and $L(r/2)$, i.e., $\Tilde{L}(r_{\text{mid}}) \gets L(r_{\text{mid}}) + \mathcal{N}(0, 4\sigma_0^2)$ and $\Tilde{L}(r_{\text{mid}} / 2) \gets L(r_{\text{mid}} / 2) + \mathcal{N}(0, 4\sigma_0^2)$. By setting $\sigma_0 \gets \sqrt{\alpha\lceil \log_2\left(\frac{\sqrt{2}}{2\theta}\right) \rceil/\tau_0}$, we guarantee that each estimation satisfies $(\alpha, \tau_0/2\lceil \log_2\left(\frac{\sqrt{2}}{2\theta}\right) \rceil)$-RDP. By composition, the output $r$ satisfies $(\alpha, \tau_0)$-RDP.    
\end{proof}

\begin{algorithm}[h!] 
\caption{GoodRadius \citep{nissim2016locating}}
\label{alg:radius}
\begin{algorithmic}[1]

\STATE {\bf Input:} Database $ \mathcal{P}\in (\mathcal{X}^d)^n$, desired ball volume $t$, noise multiplier $\sigma_0$, tolerance $\theta$. (We set $\theta=0.1$ in our experiments.)

\STATE {\bf Notation:} For a radius \( r \geq 0 \) and a point \( p \in \mathcal{X}^d \), let \( \mathcal{B}_r(p) \) denote the number of input points contained in a $\ell_2$-ball of radius \( r \) around \( p \). That is, \( \mathcal{B}_r(p) = \{ i: \|x_i - p\|_2 \leq r \} \). Denote
\(\Bar{\mathcal{B}}_r(p) = \min \{ \mathcal{B}_r(p), t \}.\)

\STATE For \( r \geq 0 \), define \( L \) as
\(L(r) = \frac{1}{t} \max_{\substack{distinct \ i_1,\ldots,i_t \in [n]}} \left\{ \Bar{\mathcal{B}}_r(x_{i_1}) + \ldots + \Bar{\mathcal{B}}_r(x_{i_t}) \right\}\)\label{algline:L_r}

\STATE $r \gets \mathrm{DPBinarySearch}(L,t,\sigma_0,\theta)$\label{algline:DP_binary_search}

$\triangleright$ \textit{Search for an \( r \) s.t. (i) \( L(r) \geq t \), (ii) \( L(r/2) < t \), (iii) satisfies $(\alpha,\tau_0)$-RDP}\

\STATE \textbf{return} $r$

\end{algorithmic}
\end{algorithm}

\begin{algorithm}[h!] 
\caption{DP Binary Search}
\label{alg:bin}
\begin{algorithmic}[1]

\STATE {\bf Input:} Function $L$, desired ball volume $t$, noise multiplier $\sigma_0$, tolerance $\theta$.

\STATE {\bf Initialize:} $r_{\text{low}} \gets 0$ and $r_{\text{high}} \gets \sqrt{2}/2$.

% \STATE $\tau \gets \tau_0/2\lceil \log_2\left( \frac{r_{\text{high}} - r_{\text{low}}}{\theta} \right) \rceil$

\STATE $\sigma_0 \gets \sqrt{\alpha\lceil \log_2\left(\frac{\sqrt{2}}{2\theta}\right) \rceil/\tau_0}$

% \STATE $\Delta \gets 1/(\lceil \log_2\left( \frac{r_{\text{high}} - r_{\text{low}}}{\theta} \right) \rceil)$

\WHILE{$r_{\text{high}} - r_{\text{low}} > \theta$}

\STATE $r_{\text{mid}} \gets (r_{\text{low}} + r_{\text{high}})/2$

\STATE $\Tilde{L}(r_{\text{mid}}) \gets L(r_{\text{mid}}) + \mathcal{N}(0, 4\sigma_0^2)$

\STATE $\Tilde{L}(r_{\text{mid}} / 2) \gets L(r_{\text{mid}} / 2) + \mathcal{N}(0, 4\sigma_0^2)$

\IF{$\Tilde{L}(r_{\text{mid}} / 2) \geq t$}

\STATE $r_{\text{high}} \gets r_{\text{mid}}$

\ELSIF{$\Tilde{L}(r_{\text{mid}})\geq t$}

\STATE $r_{\text{high}} \gets r_{\text{mid}}$

\ELSE

\STATE $r_{\text{low}} \gets r_{\text{mid}}$

\ENDIF

\ENDWHILE

\STATE \textbf{return} $(r_{\text{low}} + r_{\text{high}})/2$

\end{algorithmic}
\end{algorithm}

% Regarding "GoodRadius" subroutine by Nissim et al. (2016), we will provide a detailed description of the algorithm in Appendix A.2 in our revision. In Line 4 of GoodRadius, we search for an $r$ such that $L(r) \geq t$, $L(r/2) < t$, and satisfies $(\alpha,\tau_0)$-RDP. This is performed by privately using binary search with noisy estimates of $L$ from Line 3 for the comparisons. DP is guaranteed using Gaussian mechanism.

% we introduce a tolerance parameter $\theta$, which determines the precision of the binary search. The tolerance controls how close the final estimate of the radius $r$ needs to be to the true value. The number of steps for the binary search is then calculated as $\lceil \log_2\left(\frac{\sqrt{2}}{2\theta}\right) \rceil$, ensuring that we control the privacy budget effectively. We will also include pseudocode for this part in the revision to make our approach clear.

\section{Privacy Analysis}\label{appx:proof DP}

In this section, we provide the privacy analysis for \Cref{alg:adaptive DP few-shot generation}. 

\begin{theorem}[Restatement of \Cref{thm:DP}]
    \Cref{alg:adaptive DP few-shot generation} is $(\varepsilon , \delta)$-differentially private.
\end{theorem}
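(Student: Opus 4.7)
The plan is to use the Rényi Differential Privacy (RDP) framework to track the privacy cost through composition and then convert the final RDP guarantee back to $(\varepsilon,\delta)$-DP via the lemma of \citet{balle2020hypothesis}. The overall structure of Algorithm~\ref{alg:adaptive DP few-shot generation} is an outer loop over $T_{\max}$ tokens; inside each outer iteration there is a single call to GoodRadius, a single subsampling step embedded in Next Token Generation (drawing $MN$ private examples), plus up to $\hat{T}+1$ projected-mean estimates and up to $\hat{T}$ noisy coverage checks. I would account for each primitive in isolation, compose within an iteration, apply subsampled-RDP amplification for the iteration as a whole, compose across the $T_{\max}$ outer iterations, and finally optimize over the RDP order $\alpha$ to meet the target $(\varepsilon,\delta)$.

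Concretely, the per-primitive analysis is: GoodRadius gives $(\alpha,\tau_0)$-RDP by Theorem~\ref{thm:DP radius}; each projected-mean step (Lines~\ref{algline:int_DP mean} and \ref{algline:avg noise}) adds Gaussian noise $\mathcal{N}(0,4R^2\sigma_1^2 I)$ to a sum whose $\ell_2$-sensitivity is at most $2R$ because of the explicit projection in Line~\ref{algline:clip} (or the simplex membership for the initial pass), yielding $\tau_1=\alpha/(2\sigma_1^2)$; and each coverage check (Line~\ref{algline:test cover}) adds $\mathcal{N}(0,\sigma_2^2)$ to a count of sensitivity $1$, yielding $\tau_2=\alpha/(2\sigma_2^2)$. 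RDP composition within one outer iteration then gives $\tau=\tau_0+(\hat{T}+1)\tau_1+\hat{T}\tau_2$. The subsampling in Next Token Generation lets me sharpen this per-iteration bound via the subsampled-RDP amplification inequality with sampling rate $MN/|\mathcal{D}_\text{priv}|$, after which standard $T_{\max}$-fold RDP composition and the RDP-to-DP conversion deliver the claimed $(\varepsilon,\delta)$ guarantee.

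The main obstacle is handling the data-dependent control flow of the inner refinement loop cleanly, since the loop may break early based on the outcome of the coverage check or on whether $R$ has shrunk below the adaptive threshold. The standard remedy is a worst-case accounting: charge the privacy budget as if all $\hat{T}+1$ mean estimates and all $\hat{T}$ coverage checks are executed, regardless of early termination, so the composition bound is valid on every branch. A second subtlety is justifying the $2R$ sensitivity in each inner round even though $R$, $\tilde{p}_\text{priv}$, and the clipped points $\tilde{p}_\text{priv}^i$ are themselves functions of the private data and of the earlier noisy outputs; this is handled by treating $R$ and $\tilde{p}_\text{priv}$ as post-processing of already-released DP outputs and noting that for neighboring datasets the projection onto the $R$-ball changes by at most $2R$ in a single coordinate, so the Gaussian mechanism in Line~\ref{algline:avg noise} retains its claimed RDP parameter. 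A final bookkeeping point is that the mapping back onto the simplex in Lines~\ref{algline:int_map2simplex} and \ref{algline:map2simplex}, the $\arg\max$ in Line~\ref{algline:argmax}, and the loop-control comparison in Line~\ref{algline:test decrease} (which depends only on previously released RDP outputs) are all free by post-processing, so they do not enter the accounting. Collecting these pieces and applying the conversion lemma completes the proof.
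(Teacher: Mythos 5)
Your proposal matches the paper's proof: it uses the same RDP accounting framework, the same decomposition of one outer iteration into one GoodRadius call plus at most $\hat{T}+1$ projected-mean estimates and at most $\hat{T}$ coverage checks, the same per-iteration bound $\tau = \tau_0 + (\hat{T}+1)\tau_1 + \hat{T}\tau_2$, the same subsampled-RDP amplification from drawing $MN$ samples in Next Token Generation, the same $T_{\max}$-fold RDP composition, and the same conversion via \citet{balle2020hypothesis}. The two subtleties you flag — charging the budget as if all $\hat{T}$ inner rounds always execute regardless of early termination, and justifying the $2R$ sensitivity by conditioning on previously released noisy outputs (adaptive composition / post-processing) — are handled the same way in the paper, which simply leaves them implicit.
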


We introduce some concepts and relevant theorems from the literature required for our analysis first. While $(\varepsilon,\delta)$-DP is a useful definition of privacy, it does not allow us to tightly quantify the privacy guarantees from the composition of multiple mechanisms. Instead, the notion of Rényi Differential Privacy (RDP) \citep{mironov2017renyi} provides a succinct way to monitor the privacy costs from the composition of multiple mechanisms.

\begin{definition}[Rényi Divergence \citep{mironov2017renyi}]
    For two probability distributions \( P \) and \( Q \) defined over \( \mathbb{R} \), the Rényi divergence of order \( \alpha > 1 \) is
\[
D_{\alpha}(P \| Q) =\frac{1}{\alpha - 1} \log \mathbb{E}_{x \sim Q} \left( \frac{P(x)}{Q(x)} \right)^{\alpha}.
\]
\end{definition}

\begin{definition}[Rényi Differential Privacy \citep{mironov2017renyi}]

A randomized algorithm $\mathcal{A}$ is $\tau$-Rényi differentially private of order $\alpha$ ($(\alpha, \tau)$-RDP) if for any two neighboring inputs $\mathcal{X}$ and $\mathcal{X}^\prime$, which differ in only a single record, we have
\[
D_{\alpha}(\mathcal{A} \left(\mathcal{X}\right) \| \mathcal{A} \left(\mathcal{X}^{\prime}\right)) \leq \tau.
\]

\end{definition}

\begin{theorem}[RDP Sequential Composition \citep{mironov2017renyi}]\label{thm:seq comp}
    If $\mathcal{A}_1$ and $\mathcal{A}_2$ are $(\alpha, \tau_1)$-RDP and $(\alpha, \tau_2)$-RDP respectively then the mechanism combining the two $g(\mathcal{A}_1 (\mathcal{X}),\mathcal{A}_2 (\mathcal{X}))$ is $(\alpha, \tau_1 + \tau_2)$-RDP.   
\end{theorem}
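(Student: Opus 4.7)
The plan is to follow the standard RDP recipe: express each primitive inside one outer iteration of \Cref{alg:adaptive DP few-shot generation} as a Rényi differentially private mechanism, compose them within the iteration, amplify by the subsampling performed in Next Token Generation, compose across all $T_{\max}$ outer iterations, and finally invoke the conversion lemma of \citet{balle2020hypothesis} to translate the aggregate RDP bound back into an $(\varepsilon,\delta)$-DP guarantee. Noise multipliers $\sigma_0,\sigma_1,\sigma_2$ are then calibrated so that the conversion hits the target $(\varepsilon,\delta)$.

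First I would isolate the three primitive DP mechanisms invoked in iteration $t$. The call to GoodRadius on Line~\ref{algline:desired radius} is $(\alpha,\tau_0)$-RDP by \Cref{thm:DP radius}. For the projected mean estimations on Lines~\ref{algline:int_DP mean} and~\ref{algline:avg noise}, the clipping step on Line~\ref{algline:clip} (and the trivial simplex bound for the initial call) guarantees that every $\tilde p_{\text{priv}}^i$ lies within the $\ell_2$-ball of radius $R$ about $\tilde p_{\text{priv}}$, so the $\ell_2$-sensitivity of $\sum_i \tilde p_{\text{priv}}^i$ is at most $2R$; adding $\mathcal N(0,4R^2\sigma_1^2 I)$ therefore yields an $(\alpha,\tau_1)$-RDP mechanism with $\tau_1=\alpha/(2\sigma_1^2)$. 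The radius coverage check on Line~\ref{algline:test cover} is a noisy threshold test on a count query of sensitivity $1$, so it is $(\alpha,\tau_2)$-RDP with $\tau_2=\alpha/(2\sigma_2^2)$. Each outer iteration invokes GoodRadius once, at most $\hat T+1$ projected mean estimations, and at most $\hat T$ coverage checks, so by \Cref{thm:seq comp} the per-iteration cost is $\tau=\tau_0+(\hat T+1)\tau_1+\hat T\tau_2$.

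Next I would handle the subsampling amplification. Since Next Token Generation (Line~\ref{algline:tang}) first draws $MN$ examples uniformly at random from $\mathcal D_{\text{priv}}$ restricted to label $\tilde y$, every downstream computation in that iteration is a post-processing of a Poisson/uniform subsample with rate $q=MN/|\mathcal D_{\text{priv}}|$. I would invoke the RDP subsampling amplification lemma to replace $\tau$ by an amplified $\tau'(\alpha,q)$. Applying RDP sequential composition once more across the $T_{\max}$ token-generation iterations gives an overall $(\alpha,T_{\max}\cdot\tau'(\alpha,q))$-RDP bound, which the conversion lemma then translates into the desired $(\varepsilon,\delta)$-DP statement after optimizing over $\alpha$.

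The main obstacle is ensuring that the data-adaptive quantities inside the iteration, in particular the target radius $r$, the current radius $R$, the projected center $\tilde p_{\text{priv}}$, and the threshold $r+2\lambda R\sigma_1\sqrt{K}/M$ used on Line~\ref{algline:test cover}, can legitimately be treated as DP-released public parameters when they feed into the sensitivity of later mechanisms on Lines~\ref{algline:count cover}--\ref{algline:avg noise}. This is resolved by post-processing: once $r,R,\tilde p_{\text{priv}}$ have been released under their respective RDP accounts, any deterministic function of them (including the clipping radius on Line~\ref{algline:clip} and the threshold on Line~\ref{algline:test cover}) incurs no further privacy cost, so the sensitivities $2R$ and $1$ used above remain valid conditionally. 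A related subtlety is that the number of inner iterations is random and upper-bounded by $\hat T$; I would account for the worst case uniformly, observing that the early-exit conditions on Lines~\ref{algline:test cover} and~\ref{algline:test decrease} only improve the utility, not the accounted privacy cost. Once these points are pinned down, the composition--subsampling--conversion pipeline delivers the claimed $(\varepsilon,\delta)$-DP guarantee.
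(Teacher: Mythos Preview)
Your proposal does not address the stated theorem. \Cref{thm:seq comp} is the RDP sequential composition result of \citet{mironov2017renyi}, which the paper merely \emph{cites} as a known tool and does not prove; there is accordingly no ``paper's own proof'' of this statement to compare against. What you have written is instead a proof sketch for \Cref{thm:DP} (that \Cref{alg:adaptive DP few-shot generation} is $(\varepsilon,\delta)$-DP), in which \Cref{thm:seq comp} appears only as one of the invoked lemmas.

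If your intent was to prove \Cref{thm:DP}, then your sketch is correct and matches the paper's proof in \Cref{appx:proof DP} essentially step for step: identify the three RDP primitives per iteration (GoodRadius, projected mean estimation, coverage check) with their respective sensitivities, compose them via \Cref{thm:seq comp} to get $\tau=\tau_0+(\hat T+1)\tau_1+\hat T\tau_2$, apply the subsampling amplification of \Cref{thm:subsample}, compose across the $T_{\max}$ outer iterations, and convert via \Cref{lemma:RDP2DP}. Your additional remarks on treating the adaptively released quantities $r$, $R$, $\tilde p_{\text{priv}}$ as public via post-processing, and on worst-case accounting over the random number of inner iterations, are correct and make explicit points that the paper leaves implicit.
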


% \begin{theorem}[DP to RDP conversion \citep{asoodeh2021three}]\label{lemma:DP2RDP}
%     For any $\alpha > 1, \varepsilon \geq 0,$ and $\delta \in (0,1)$,
% \[
% \tau_\alpha^\varepsilon (\delta) = \varepsilon + \frac{1}{\alpha - 1} \log M(\alpha, \varepsilon, \delta),
% \]
% where $\bar{p} := 1 - p$ and
% \[
% M(\alpha, \varepsilon, \delta) := \min_{p\in(\delta,1)} \left[ p^\alpha (p - \delta)^{1-\alpha} + \bar{p}^\alpha (\varepsilon - p + \delta)^{1-\alpha} \right].
% \]
% \end{theorem}

\begin{theorem}[RDP to DP conversion \citep{balle2020hypothesis}]\label{lemma:RDP2DP}
    If a mechanism $M$ is $(\alpha, \tau)$-RDP then it is $(\tau + \log((\alpha - 1)/\alpha) - (\log \delta + \log \alpha) / (\alpha - 1), \delta)$-DP for any $0 < \delta < 1$.
\end{theorem}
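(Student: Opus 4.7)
The plan is to reduce to the standard event-based formulation of $(\varepsilon,\delta)$-DP: for any two neighboring inputs $\mathcal{D},\mathcal{D}'$ with output distributions $P=M(\mathcal{D})$ and $Q=M(\mathcal{D}')$, and any measurable event $E$,
\[
P(E)\ \leq\ e^{\varepsilon}\, Q(E) + \delta.
\]
I would fix such an event $E$, set $Z = dP/dQ$ (assuming without loss of generality that $P\ll Q$; the singular part can be absorbed into $\delta$), and translate the RDP hypothesis $D_\alpha(P\|Q)\leq\tau$ into the single moment statement $\mathbb{E}_Q[Z^\alpha]\leq e^{(\alpha-1)\tau}$. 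This is the only analytic input about $P,Q$ that will be used; everything downstream is a deterministic inequality bounding $\mathbb{E}_Q[Z\,\mathbf{1}_E]$ in terms of $\mathbb{E}_Q[Z^\alpha]$ and $Q(E)$, so the result ultimately reduces to a moment problem.

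First, I would apply H\"older's inequality to the identity $P(E) = \mathbb{E}_Q[Z\,\mathbf{1}_E]$ with conjugate exponents $\alpha$ and $\alpha/(\alpha-1)$:
\[
P(E)\ \leq\ \bigl(\mathbb{E}_Q[Z^\alpha]\bigr)^{1/\alpha}\, Q(E)^{(\alpha-1)/\alpha}\ \leq\ e^{(\alpha-1)\tau/\alpha}\, Q(E)^{(\alpha-1)/\alpha}.
\]
This gives a sublinear-in-$Q(E)$ envelope and is not yet in DP shape. The next step linearizes it via a weighted Young's inequality with a free parameter $\lambda > 0$: writing the right-hand side as $(\lambda Q(E))^{(\alpha-1)/\alpha}\cdot\bigl(e^{(\alpha-1)\tau/\alpha}\lambda^{-(\alpha-1)/\alpha}\bigr)$ and applying Young with exponents $p=\alpha/(\alpha-1)$ and $q=\alpha$ yields
\[
P(E)\ \leq\ \frac{\alpha-1}{\alpha}\,\lambda\, Q(E)\ +\ \frac{e^{(\alpha-1)\tau}}{\alpha\,\lambda^{\alpha-1}}.
\]

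Finally, I would choose $\lambda$ so that the coefficient of $Q(E)$ equals $e^{\varepsilon}$, namely $\lambda = \alpha e^{\varepsilon}/(\alpha-1)$. Substituting and simplifying identifies the additive remainder as $\delta = \alpha^{-1}\bigl((\alpha-1)/\alpha\bigr)^{\alpha-1} e^{(\alpha-1)(\tau-\varepsilon)}$; taking logs and solving for $\varepsilon$ produces exactly $\varepsilon = \tau + \log((\alpha-1)/\alpha) - (\log\delta + \log\alpha)/(\alpha-1)$, which matches the target formula. The step I expect to be the main obstacle is the Young's-inequality linearization together with the subsequent optimization in $\lambda$: a naive Markov-on-the-privacy-loss argument only delivers the looser Mironov bound $\tau + \log(1/\delta)/(\alpha-1)$, and the tighter correction terms $\log((\alpha-1)/\alpha)$ and $\log\alpha/(\alpha-1)$ arise precisely from pinning down the two prefactors $((\alpha-1)/\alpha)^{\alpha-1}$ and $1/\alpha$ correctly in the Young step. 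Since the inequality holds pointwise in $E$, taking the supremum over measurable events for every pair of neighboring datasets gives the $(\varepsilon,\delta)$-DP conclusion for $M$.
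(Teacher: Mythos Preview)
The paper does not actually prove this statement; it is quoted verbatim as a known conversion lemma from \citet{balle2020hypothesis} and invoked as a black box in the proof of Theorem~\ref{thm:DP}. So there is no ``paper's own proof'' to compare against.

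Your argument is correct. The chain H\"older $\to$ weighted Young $\to$ optimize the free parameter $\lambda$ reproduces exactly the constants in the stated bound, and your algebra checks out: with $\lambda=\alpha e^{\varepsilon}/(\alpha-1)$ the additive slack is $\alpha^{-1}\bigl((\alpha-1)/\alpha\bigr)^{\alpha-1}e^{(\alpha-1)(\tau-\varepsilon)}$, and inverting for $\varepsilon$ gives the claimed expression. Your remark that the naive Markov/privacy-loss argument only recovers Mironov's looser $\tau+\log(1/\delta)/(\alpha-1)$ is also on point; the extra $\log((\alpha-1)/\alpha)$ and $-\log\alpha/(\alpha-1)$ terms are precisely what the Young linearization buys over a threshold split. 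For completeness: the original derivation in \citet{balle2020hypothesis} proceeds via the hypothesis-testing/privacy-profile formulation rather than an explicit H\"older--Young step, but the two routes are equivalent and yield the identical tight constant. The one minor point worth tightening in a final write-up is the ``$P\ll Q$, singular part absorbed into $\delta$'' clause---for RDP with finite $\tau$ and $\alpha>1$ one automatically has $P\ll Q$, so no absorption is needed.
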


One of the most widely used mechanisms to guarantee RDP is the Gaussian mechanism.

\begin{theorem}[Gaussian Mechanism]\label{thm:gaussian}
    The Gaussian mechanism $M : \mathcal{X} \to \mathbb{R}^m$ of the form
\[
M(x) = q(x) + \mathcal{N}\left(0, \frac{\Delta_2(q)^2 \alpha\mathbf{I}_m}{2\tau}\right)
\]
satisfies $(\alpha, \tau)$-RDP, where $\Delta_2(q) = \max_{x,x'} \lVert q(x) - q(x') \rVert_2$ is the $\ell_2$-sensitivity of the query $q$.
\end{theorem}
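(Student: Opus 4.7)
The plan is to follow the RDP-based roadmap outlined in the proof sketch, verifying each RDP ingredient rigorously, then combining them via sequential composition and privacy amplification by subsampling before converting back to $(\varepsilon,\delta)$-DP. Within a single outer-loop iteration of \Cref{alg:adaptive DP few-shot generation}, I would identify three data-dependent mechanisms and bound their RDP costs: (i) the GoodRadius call on Line~\ref{algline:desired radius}, which is $(\alpha,\tau_0)$-RDP by \Cref{thm:DP radius}; (ii) up to $\hat{T}+1$ projected mean estimations on Lines~\ref{algline:int_DP mean} and \ref{algline:avg noise}, each of which computes $\sum_i \Tilde{p}_\text{priv}^i$ having $\ell_2$-sensitivity at most $2R$ (because every projected vector lies in an $\ell_2$-ball of radius $R$ around $\Tilde{p}_\text{priv}$) and adds $\mathcal{N}(0,4R^2\sigma_1^2 I)$ noise, so by \Cref{thm:gaussian} each is $(\alpha,\tau_1)$-RDP with $\tau_1=\alpha/(2\sigma_1^2)$; and (iii) up to $\hat{T}$ radius coverage checks on Line~\ref{algline:test cover}, which add $\mathcal{N}(0,\sigma_2^2)$ to a counting query of sensitivity one, hence $(\alpha,\tau_2)$-RDP with $\tau_2=\alpha/(2\sigma_2^2)$. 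Applying \Cref{thm:seq comp} with the worst-case inner-loop length then yields $(\alpha,\tau)$-RDP per iteration with $\tau=\tau_0+(\hat{T}+1)\tau_1+\hat{T}\tau_2$.

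Next I would account for the random subsampling of $MN$ examples in the \texttt{Next Token Generation} call on Line~\ref{algline:tang} via the standard subsampled-RDP amplification bound, replacing $\tau$ by an amplified $\tau_{\mathrm{sub}}(\alpha,q,\tau)$ for a sampling rate $q$ determined by $MN$ and $|\mathcal{D}_{\text{priv}}|$. Composing over the $T_{\max}$ outer iterations (and, across the full pipeline, over the $n_{\text{shots}}$ independent calls to the algorithm, one per label) via sequential composition gives a total RDP budget. A final application of the RDP-to-DP conversion in \Cref{lemma:RDP2DP}, optimized over $\alpha>1$, produces an $(\varepsilon,\delta)$ bound; the noise multipliers $\sigma_0,\sigma_1,\sigma_2$ and the hyperparameters $\hat{T}, M, N, T_{\max}$ are calibrated so that this bound matches the prescribed $\varepsilon$ at the given $\delta$.

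The main obstacle I anticipate is justifying the sensitivity claim for the projected mean step \emph{despite the fact that $R$ itself is chosen in a data-adaptive fashion}. The key observation is that, conditional on the previous randomness (in particular the current values of $r$, $R$, and $\Tilde{p}_\text{priv}$), the projection on Line~\ref{algline:clip} deterministically maps every vector into an $\ell_2$-ball of radius $R$ around the current center, so the sensitivity of $\sum_i\Tilde{p}_\text{priv}^i$ to changing a single private example is at most $2R$ and the calibrated Gaussian noise $\mathcal{N}(0,4R^2\sigma_1^2 I)$ is valid under adaptive RDP composition. Similarly, the adaptive early termination of the inner loop is handled by paying the full $(\hat{T}+1)\tau_1+\hat{T}\tau_2$ worst-case cost upfront rather than trying to exploit early stopping. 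The re-normalization onto the probability simplex in Lines~\ref{algline:int_map2simplex} and \ref{algline:map2simplex} and the $\arg\max$ on Line~\ref{algline:argmax} are purely post-processing and contribute no additional privacy cost; the formal bookkeeping is deferred to \Cref{appx:proof DP}.
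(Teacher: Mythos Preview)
Your proposal does not prove the stated theorem. The statement in question is \Cref{thm:gaussian}, the RDP guarantee for the \emph{Gaussian mechanism itself}: adding $\mathcal{N}\!\bigl(0,\tfrac{\Delta_2(q)^2\alpha}{2\tau}\mathbf{I}_m\bigr)$ to a query of $\ell_2$-sensitivity $\Delta_2(q)$ yields $(\alpha,\tau)$-RDP. What you have written is instead a proof of \Cref{thm:DP}, the overall $(\varepsilon,\delta)$-DP guarantee for \Cref{alg:adaptive DP few-shot generation}, in which \Cref{thm:gaussian} is invoked as a black box (you cite it explicitly when you compute $\tau_1=\alpha/(2\sigma_1^2)$ and $\tau_2=\alpha/(2\sigma_2^2)$). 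So the proposal simply targets the wrong statement.

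In the paper, \Cref{thm:gaussian} is quoted without proof as a standard fact from \citet{mironov2017renyi}. A direct proof would proceed by computing the R\'enyi divergence of order $\alpha$ between two multivariate Gaussians with the same covariance $\sigma^2\mathbf{I}_m$ and means differing by a vector $v$ with $\|v\|_2\le\Delta_2(q)$, obtaining $D_\alpha\bigl(\mathcal{N}(\mu,\sigma^2\mathbf{I}_m)\,\|\,\mathcal{N}(\mu+v,\sigma^2\mathbf{I}_m)\bigr)=\alpha\|v\|_2^2/(2\sigma^2)$, and then setting $\sigma^2=\Delta_2(q)^2\alpha/(2\tau)$ so that this divergence is at most $\tau$. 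None of this appears in your proposal. If your intent was to prove \Cref{thm:DP}, your outline is essentially the same as the paper's proof in \Cref{appx:proof DP}; but as a proof of \Cref{thm:gaussian} it is a non sequitur.
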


We also need the following privacy amplification theorem by subsampling.

\begin{theorem}[Amplification by subsampling \citep{wang2019subsampled}]\label{thm:subsample}

Given a dataset of $n$ points drawn from a domain $\mathcal{X}$ and a mechanism $\mathcal{M}$ that takes an input from $\mathcal{X}^m$ for $m \leq n$, let the randomized algorithm $\mathcal{M}\circ \text{subsample}$ be defined as: i) subsample without replacement $m$ datapoints of the dataset (sampling parameter $\gamma=m / n$ ), and ii) apply $\mathcal{M}$ to the subsampled dataset. For all integers $\alpha \geq 2$, if $\mathcal{M}$ obeys $(\alpha, \tau(\alpha))$-RDP, then the subsampled mechanism $\mathcal{M} \circ \text{subsample}$ obeys $\left(\alpha, \tau^{\prime}(\alpha)\right)$ RDP where,
$$
\begin{aligned}
\tau^{\prime}(\alpha) \leq & \frac{1}{\alpha-1} \log \left(1+\gamma^2\binom{\alpha}{2} \min \left\{4\left(e^{\tau(2)}-1\right),\right.\right. \\
&\left.e^{\tau(2)} \min \left\{2,\left(e^{\tau(\infty)}-1\right)^2\right\}\right\} \\
&+\left.\sum_{j=3}^\alpha \gamma^j\binom{\alpha}{j} e^{(j-1) \tau(j)} \min \left\{2,\left(e^{\tau(\infty)}-1\right)^j\right\}\right) .
\end{aligned}
$$
\end{theorem}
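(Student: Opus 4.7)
The plan is to bound the privacy loss through Rényi Differential Privacy (RDP), which composes tightly, and convert back to $(\varepsilon,\delta)$-DP at the end via Theorem~\ref{lemma:RDP2DP}. I would proceed in three stages: per-iteration RDP accounting, subsampling amplification, and composition across the $T_{\max}$ outer iterations.

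First, I would fix an order $\alpha>1$ and bound the RDP of a single outer iteration. The call to GoodRadius in Line~\ref{algline:desired radius} is $(\alpha,\tau_0)$-RDP by Theorem~\ref{thm:DP radius}. The projected mean estimations in Lines~\ref{algline:int_DP mean} and~\ref{algline:avg noise} add Gaussian noise $\mathcal{N}(0,4R^2\sigma_1^2 I)$ to $\sum_i \tilde{p}_\text{priv}^i$; the key sensitivity observation is that Next Token Generation partitions the $MN$ drawn samples into $M$ disjoint subsets, so altering a single record of $\mathcal{D}_\text{priv}$ modifies at most one vector $\tilde{p}_\text{priv}^i$, and combined with either the clipping in Line~\ref{algline:clip} (for Line~\ref{algline:avg noise}) or the fact that the probability simplex has radius $\sqrt{2}/2=R$ (for Line~\ref{algline:int_DP mean}), the $\ell_2$-sensitivity of the sum is at most $2R$, so Theorem~\ref{thm:gaussian} assigns each release $(\alpha,\tau_1)$-RDP. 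The radius coverage check in Line~\ref{algline:test cover} releases $t+\mathcal{N}(0,\sigma_2^2)$, where $t$ is an integer count whose sensitivity is $1$ by the same disjoint-subsets argument, giving $(\alpha,\tau_2)$-RDP. The simplex projections in Lines~\ref{algline:int_map2simplex}, \ref{algline:map2simplex} and the argmax in Line~\ref{algline:argmax} are post-processing of DP outputs. Since the inner loop runs at most $\hat{T}$ times, each outer iteration invokes GoodRadius once, projected mean estimation at most $\hat{T}+1$ times, and the coverage check at most $\hat{T}$ times, so sequential composition (Theorem~\ref{thm:seq comp}) yields $(\alpha,\tau)$-RDP per iteration with $\tau=\tau_0+(\hat{T}+1)\tau_1+\hat{T}\tau_2$.

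Second, I would apply subsampling amplification. Since $\mathcal{D}_\text{priv}$ enters each outer iteration only through the $MN$ records drawn without replacement inside Next Token Generation, the full iteration factors as $\mathcal{M}\circ\mathrm{subsample}$ with sampling rate $\gamma=MN/|\mathcal{D}_\text{priv}|$; Theorem~\ref{thm:subsample} then yields an amplified per-iteration bound $(\alpha,\tau'(\alpha))$. Composing across all $T_{\max}$ outer iterations via Theorem~\ref{thm:seq comp} gives $(\alpha, T_{\max}\tau'(\alpha))$-RDP for the whole algorithm, and Theorem~\ref{lemma:RDP2DP} converts this to $(\varepsilon,\delta)$-DP after optimizing over $\alpha$ at the prescribed $\delta$.

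The main subtlety I expect to handle carefully is the data-dependent control flow of the inner loop: both break conditions (Lines~\ref{algline:test cover} and~\ref{algline:test decrease}) depend on outputs of DP mechanisms, so the number of invocations of each sub-mechanism is itself a random variable. The clean way to address this is to bound by the worst case $\hat{T}$ and appeal to sequential composition uniformly over execution paths, formally padding unused iterations with null releases so that every trajectory triggers exactly $\hat{T}+1$ mean estimations and $\hat{T}$ coverage checks. Once this bookkeeping is set, the remainder is routine: collect the $\tau_i$ expressions from Theorems~\ref{thm:DP radius} and~\ref{thm:gaussian}, substitute into the subsampling bound in Theorem~\ref{thm:subsample}, and numerically tune $\alpha$ to obtain the desired $(\varepsilon,\delta)$.
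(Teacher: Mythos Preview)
Your proposal does not address the stated theorem. The statement you were asked to prove is Theorem~\ref{thm:subsample}, the subsampling amplification bound quoted from Wang et~al.\ (2019); the paper supplies no proof of it --- it is an external result imported as a black box. What you have written is instead a proof of Theorem~\ref{thm:DP} (that Algorithm~\ref{alg:adaptive DP few-shot generation} is $(\varepsilon,\delta)$-DP), which \emph{invokes} Theorem~\ref{thm:subsample} as one step but does not establish it.

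If your intent was actually Theorem~\ref{thm:DP}, then your argument is correct and follows the same three-stage structure as the paper's own proof in the appendix: per-iteration RDP accounting for GoodRadius, the projected mean estimations, and the coverage checks; subsampling amplification via Theorem~\ref{thm:subsample}; composition across the $T_{\max}$ outer iterations; and conversion to $(\varepsilon,\delta)$-DP via Theorem~\ref{lemma:RDP2DP}. Your version is slightly more explicit than the paper's on the disjoint-subsets justification behind the $2R$ sensitivity and on handling the data-dependent break conditions by worst-case padding, but these are refinements, not a different route.
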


We are ready to do the privacy analysis of our algorithm (\Cref{thm:DP}).

\begin{proof}
Fix one iteration of the algorithm and let us bound the privacy loss. Our algorithm consists of 3 components: a DP algorithm GoodRadius (Line~\ref{algline:desired radius}), DP projected mean estimation using Gaussian DP Mechanism (Line~\ref{algline:avg noise}) and DP radius coverage check using Gaussian DP Mechanism (Line~\ref{algline:test cover}). The output of GoodRadius algorithm $r$ on Line~\ref{algline:desired radius} satisfies $(\alpha,\tau_0)$-RDP by \Cref{thm:DP radius}. Consider projected mean estimation using Gaussian Mechanism. The projection operation in Line~\ref{algline:clip} ensures that the $\ell_2$-sensitivity of the term $\sum_{i=1}^M \Tilde{p}_\text{priv}^{i}$ in Line~\ref{algline:avg noise} is at most $2R$. As we are adding Gaussian noise sampled from $\mathcal{N}(0, 4R^2\sigma_1^2I)$, we can ensure $\Tilde{p}_\text{priv}$ in Line~\ref{algline:avg noise} satisfies $(\alpha,\tau_1)$-RDP by setting $\sigma_1 = \sqrt{\alpha/2\tau_1}$ (\Cref{thm:gaussian}). For the radius coverage check using Gaussian Mechanism, in Line~\ref{algline:count cover}, $t$ counts the number of points of $p_\text{priv}^{1},\ldots,p_\text{priv}^{M}$ covered in a ball with center $\Tilde{p}_\text{priv}$ and radius $r +  \frac{2\lambda  R\sigma_1 \sqrt{K}}{M}$. The sensitivity of the term $t$ is 1. Then in Line\ref{algline:test cover} we add Gaussian noise sampled from $\mathcal{N}(0, \sigma_2^2I)$ to $t$. By setting $\sigma_2 = \sqrt{\alpha/2\tau_2}$, we can show the radius coverage check satisfies $(\alpha,\tau_2)$-RDP (\Cref{thm:gaussian}). 
% Therefore, we can guarantee that the outputs of the three components GoodRadius, projected mean estimation, and radius coverage check satisfy $(\alpha,\tau_0)$-RDP, $(\alpha,\tau_1)$-RDP and $(\alpha,\tau_2)$-RDP, respectively. 
Our algorithm consists of one instance of $(\alpha,\tau_0)$-RDP algorithm GoodRadius (Line~\ref{algline:desired radius}), at most $(\hat{T}+1)$ instances of $(\alpha,\tau_1)$-RDP projected mean estimation using Gaussian DP Mechanism (Line~\ref{algline:int_DP mean} and Line~\ref{algline:avg noise}) and at most $\hat{T}$ instances of $(\alpha,\tau_2)$-RDP radius coverage check using Gaussian DP Mechanism (Line~\ref{algline:test cover}). We apply the composition property of RDP (\Cref{thm:seq comp}) to ensure each iteration of the algorithm is $(\alpha,\tau)$-RDP, that is, we ensure that $\tau = \tau_0 + (\hat{T}+1)\tau_1 + \hat{T}\tau_2$. 

Further, note that in each iteration, the Next Token Generation subroutine \citep{tang2023privacy} involves drawing $MN$ samples from the dataset $\mathcal{D}_\text{priv}$, there is a privacy amplification by subsampling at each iteration. We apply \Cref{thm:subsample} to show that the effective privacy loss per iteration of our algorithm is $(\alpha,\tau^\prime)$-RDP for the full dataset.

Finally, we compose privacy loss across all the $T_{max}$ iterations of our algorithm using composition theorems (\Cref{thm:seq comp}). We conclude that \Cref{alg:adaptive DP few-shot generation} satisfies $(\alpha,T_{max}\tau')$-RDP and convert the privacy guarantee back into the standard DP definition (\Cref{lemma:RDP2DP}). 
\end{proof}

{
We then provide intuition on how the privacy budget is allocated below.

The GoodRadius step initializes a reference radius for the projected ball. Since it is not the final radius, this step can tolerate higher noise levels. Therefore, we typically set $\sigma_0$ to a large value (e.g., 10) to save privacy budget for the more sensitive projected mean estimation step.

The radius coverage check verifies whether a sufficient number of the original probability vectors $p_\text{priv}^1,\ldots,p_\text{priv}^M$ lie within a small radius from $\Tilde{p}_\text{priv}$. This involves counting the number of covered vectors, adding Gaussian noise with standard deviation $\sigma_2$, and comparing the noisy count to the total number of vectors $M$. When $M$ is large, the relative impact of the noise added to the count diminishes because the noise becomes negligible compared to $M$. Therefore, as $M$ increases, we can set $\sigma_2$ to a larger value. For example, when $M=40$, $\sigma_2$ can be set to around 5 without significant loss of accuracy.

In conclusion, our privacy budget allocation prioritizes the projected mean estimation step for high accuracy, while allowing other steps to tolerate higher noise.
}

\section{Experimental Supplementary}

\subsection{Experiments Compute Resources}

The experiments use 4 NVIDIA RTX A5000 GPUs, each equipped with 24,564 MiB of memory. Approximately 24 hours are required to reproduce the main results in \Cref{sec:main exp}.

% The experiments use 4 NVIDIA H100 GPUs. Approximately 24 hours are required to reproduce the results in \Cref{sec:main exp}.

\subsection{Datasets}\label{appx:dataset}

In this section, we describe the datasets used in our experiments.

\begin{itemize}[topsep=0pt, left=0pt] 
    \item \textbf{AGNews.} The AG News (AG's News Corpus) dataset \citep{zhang2015character} comprises news articles categorized into four labels: World, Sports, Business, and Sci/Tech. It includes 30,000 training samples and 1,900 test samples per class. For our experiments, we randomly select 1,000 samples from the test set.
    
    \item \textbf{DBPedia.} The DBPedia ontology classification dataset \citep{zhang2015character} includes contents categorized into one of 14 topics: Company, School, Artist, Athlete, Politician, Transportation, Building, Nature, Village, Animal, Plant, Album, Film, and Book. This dataset includes 40,000 training samples and 5,000 test samples for each class. For our experiments, we randomly select 49,999 samples from the training set and 1,000 samples from the test set.  
    
    \item \textbf{TREC.} The Text REtrieval Conference (TREC) question classification dataset \citep{voorhees2000building} contains questions categorized into one of 6 answer types: Number, Location, Person, Description, Entity, and Abbreviation. The dataset includes 5,452 training samples and 500 test samples, distributed non-uniformly across the categories.

    \item \textbf{MIT Movies.} MIT Movies trivia10k13 dataset \citep{liu2012conversational} comprises movie reviews designed for information extraction tasks, with specific slots for movie genre (MIT-G) and director name (MIT-D). For training and testing, MIT-G includes 2,953 and 780 samples respectively, while MIT-D contains 1,561 training samples and 415 test samples.
\end{itemize}

\subsection{Non-private Baselines}\label{appx:non-priv}

In this section, we present the performance of non-private baselines for $\varepsilon = \infty$ in \Cref{tab:non-priv}: (i) DP few-shot generation algorithm operates without any added noise; (ii) 4-shot demonstrations randomly selected from the private dataset.

\begin{table*}[h!]
    \caption{Performance of non-private baselines on test datasets. We report mean and standard deviation of accuracy over 5 runs with different random seeds. $\varepsilon = \infty$ (\Cref{alg:DP few-shot generation} with $\sigma=0$) column represents DP few-shot generation algorithm operating without any added noise. $\varepsilon = \infty$ (rand) column represents using 4-shot demonstrations randomly selected from the private dataset.}
    \label{tab:non-priv}
    \begin{center}
    \adjustbox{max width= \linewidth}{
    \begin{tabular}{ccc}
    \Xhline{3\arrayrulewidth}
    Task & $\varepsilon = \infty$ (\Cref{alg:DP few-shot generation} with $\sigma=0$) & $\varepsilon = \infty$ (rand)   \\
    \Xhline{2\arrayrulewidth}
    AGNews & ${65.92}_{1.13}$ & ${65.28}_{2.65}$  \\
    % DBPedia & ${68.68}_{2.17}$ & ${67.68}_{2.58}$ \\

    DBPedia & ${67.88}_{1.75}$ & ${69.06}_{1.40}$ \\
    TREC & ${72.32}_{2.76}$ & ${53.12}_{3.72}$ \\
    MIT-G & ${43.23}_{7.00}$ & ${42.08}_{6.35}$ \\
    MIT-D & ${78.36}_{3.33}$ & ${79.33}_{1.48}$ \\
    \Xhline{3\arrayrulewidth}
    \end{tabular}
    }
    \end{center}
    
    \begin{center}
        \scriptsize

	\end{center} 
	\vspace{-0.2in}
		
\end{table*}

\subsection{Hyperparameter Search}\label{appx:hyperparameter_search}

In this section, we conduct ablation studies on key hyperparameters in \Cref{alg:adaptive DP few-shot generation}. We present results for datasets with varying \(\lambda\) and $\hat{T}$ values in \Cref{tab:lam_agnews}-\Cref{tab:lam_mitd}. Specifically, we consider $\hat{T}\in\{1,2\}$ and $\lambda\in\{0.15, 0.2, 0.25\}$ and present ICL accuracy for all these hyperparameter choices. We observe that ICL accuracy remains stable across the different hyperparameter settings. For example, on the AGNews dataset, AdaDPSyn achieves accuracy in the range of $64.28\%–65.92\%$ for $\hat{T}\in\{1,2\}$ and $\lambda\in\{0.15, 0.2, 0.25\}$, compared to $63.18\%$ for DP few-shot generation when $\varepsilon=8$. This is good in terms of the stability of \Cref{alg:adaptive DP few-shot generation}.

% % In \Cref{tab:lam2}, we present the accuracy of ICL on the test dataset of AGNews using different $\lambda_2$ values, while holding parameters such as $\lambda_1=0.7$, $\hat{T}=2$ and $\mu =0.7$ constant. 
% We report the mean and standard deviation of accuracy across 5 runs with various random seeds. 
% % Our findings indicate that with smaller \(\varepsilon\) values, such as \(\varepsilon = 1\) and \(\varepsilon = 2\), setting \(\lambda_2\) higher, at \(0.96\) and \(0.97\) respectively, yields better performance compared to higher \(\varepsilon\) values (\(\varepsilon = 4, 8\)), where \(\lambda_2 = 0.95\) is optimal. This trend suggests that with lower \(\varepsilon\) values, requiring more substantial noise additions (\(\sigma_2\) increases) for the radius coverage check, a larger \(\lambda_1\) helps mitigate the impact of the increased noise, thereby improving the model's accuracy. 
% % Similar results on other datasets are provided in \Cref{tab:lam2_db}-\Cref{tab:lam2_mitd}.

% Our findings indicate that with smaller \(\varepsilon\) values, such as \(\varepsilon = 1\), setting \(\lambda_2\) higher (e.g., \(0.98\) for DBPedia) leads to better performance compared to higher \(\varepsilon\) values (\(\varepsilon = 8\)), where \(\lambda_2 = 0.95\) is optimal. This trend suggests that with lower \(\varepsilon\) values, requiring more noise additions (\(\sigma_2\) increases) for the radius coverage check, a larger \(\lambda_2\) helps mitigate the impact of the increased noise, thereby improving the model's accuracy. 

\begin{table*}[h!]
    \caption{ICL performance on the test dataset of AGNews across varying $\lambda$ and $\hat{T}$ values, showing mean and standard deviation of accuracy over 5 runs with different random seeds. We set $\varepsilon = 8$, $\sigma_0 = 10$, $\sigma_2 =3$. For specific $\hat{T}$ values, we set $\sigma_1 = 0.58$ when $\hat{T}=1$, and $0.72$ when $\hat{T}=2$.}
    \label{tab:lam_agnews}
    %\vspace{-0.1in}
    \label{table:comp}
    \begin{center}
    \adjustbox{max width= \linewidth}{
    \begin{tabular}{cccc}
    \Xhline{3\arrayrulewidth}
      & $\lambda = 0.15$ & $\lambda = 0.2$ & $\lambda = 0.25$\\
    \Xhline{2\arrayrulewidth}
    $\hat{T} = 1$  & $64.28_{1.48}$ & $65.92_{1.02}$ & $64.90_{0.97}$\\
    $\hat{T} = 2$  & $64.38_{3.16}$ & $65.76_{1.13}$ & $65.18_{1.76}$\\
    \Xhline{3\arrayrulewidth}
    \end{tabular}
    }
    \end{center}
    
    \begin{center}
        \scriptsize

	\end{center} 
	\vspace{-0.2in}
		
\end{table*}

\begin{table*}[h!]
    \caption{ICL performance on the test dataset of DBPedia across varying $\lambda$ and $\hat{T}$ values, showing mean and standard deviation of accuracy over 5 runs with different random seeds. We set $\varepsilon = 8$, $\sigma_0 = 10$, $\sigma_2 =3$. For specific $\hat{T}$ values, we set $\sigma_1 = 0.73$ when $\hat{T}=1$, and $0.90$ when $\hat{T}=2$.}
    \label{tab:lam_db}
    %\vspace{-0.1in}
    \begin{center}
    \adjustbox{max width= \linewidth}{
    \begin{tabular}{cccc}
    \Xhline{3\arrayrulewidth}
      & $\lambda = 0.15$ & $\lambda = 0.2$ & $\lambda = 0.25$\\
    \Xhline{2\arrayrulewidth}
    $\hat{T} = 1$  & $66.14_{2.91}$ & $67.70_{1.68}$ & $66.90_{1.25}$\\
    $\hat{T} = 2$  & $66.02_{3.22}$ & $66.30_{2.09}$ & $65.28_{3.13}$\\
    \Xhline{3\arrayrulewidth}
    \end{tabular}
    }
    \end{center}
    
    \begin{center}
        \scriptsize

	\end{center} 
	\vspace{-0.2in}
		
\end{table*}

\begin{table*}[h!]
    \caption{ICL performance on the test dataset of TREC across varying $\lambda$ and $\hat{T}$ values, showing mean and standard deviation of accuracy over 5 runs with different random seeds. We set $\varepsilon = 8$, $\sigma_2 =5$. For specific $\hat{T}$ values, we set $\sigma_0 = 10$ and   $\sigma_1 = 0.89$ when $\hat{T}=1$, and $\sigma_0 = 15$ and $\sigma_1 = 1.09$ when $\hat{T}=2$.}
    \label{tab:lam_trec}
    %\vspace{-0.1in}
    \begin{center}
    \adjustbox{max width= \linewidth}{
    \begin{tabular}{cccc}
    \Xhline{3\arrayrulewidth}
      & $\lambda = 0.15$ & $\lambda = 0.2$ & $\lambda = 0.25$\\
    \Xhline{2\arrayrulewidth}
    $\hat{T} = 1$  & $70.08_{2.53}$ & $70.28_{2.63}$ & $70.52_{3.62}$\\
    $\hat{T} = 2$  & $71.24_{2.96}$ & $72.08_{2.24}$ & $71.08_{4.31}$\\
    \Xhline{3\arrayrulewidth}
    \end{tabular}
    }
    \end{center}
    
    \begin{center}
        \scriptsize

	\end{center} 
	\vspace{-0.2in}
		
\end{table*}

\begin{table*}[h!]
    \caption{ICL performance on the test dataset of MIT-G across varying $\lambda$ and $\hat{T}$ values, showing mean and standard deviation of accuracy over 5 runs with different random seeds. We set $\varepsilon = 8$, $\sigma_0 = 10$, $\sigma_2 =5$. For specific $\hat{T}$ values, we set $\sigma_1 = 0.73$ when $\hat{T}=1$, and $0.90$ when $\hat{T}=2$.}
    \label{tab:lam_mitg}
    %\vspace{-0.1in}
    \begin{center}
    \adjustbox{max width= \linewidth}{
    \begin{tabular}{cccc}
    \Xhline{3\arrayrulewidth}
      & $\lambda = 0.15$ & $\lambda = 0.2$ & $\lambda = 0.25$\\
    \Xhline{2\arrayrulewidth}
    $\hat{T} = 1$  & $37.10_{3.48}$ & $36.33_{7.28}$ & $36.51_{5.03}$\\
    $\hat{T} = 2$  & $36.95_{3.90}$ & $38.31_{5.03}$ & $37.56_{5.47}$\\
    \Xhline{3\arrayrulewidth}
    \end{tabular}
    }
    \end{center}
    
    \begin{center}
        \scriptsize

	\end{center} 
	\vspace{-0.2in}
		
\end{table*}

\begin{table*}[h!]
    \caption{ICL performance on the test dataset of MIT-D across varying $\lambda$ and $\hat{T}$ values, showing mean and standard deviation of accuracy over 5 runs with different random seeds. We set $\varepsilon = 8$, $\sigma_0 = 15$, $\sigma_2 =5$. For specific $\hat{T}$ values, we set $\sigma_1 = 0.83$ when $\hat{T}=1$, and $1.03$ when $\hat{T}=2$.}
    \label{tab:lam_mitd}
    %\vspace{-0.1in}
    \begin{center}
    \adjustbox{max width= \linewidth}{
    \begin{tabular}{cccc}
    \Xhline{3\arrayrulewidth}
      & $\lambda = 0.15$ & $\lambda = 0.2$ & $\lambda = 0.25$\\
    \Xhline{2\arrayrulewidth}
    $\hat{T} = 1$  & $75.28_{1.93}$ & $74.94_{2.88}$ & $73.30_{2.47}$\\
    $\hat{T} = 2$  & $74.31_{2.32}$ & $74.17_{1.47}$ & $73.11_{2.04}$\\
    \Xhline{3\arrayrulewidth}
    \end{tabular}
    }
    \end{center}
    
    \begin{center}
        \scriptsize

	\end{center} 
	\vspace{-0.2in}
		
\end{table*}

{We also perform ablation studies on $\sigma_0$ and $\sigma_2$. To evaluate their impact, we conduct experiments on the MIT-G dataset with $\varepsilon=8$, varying $\sigma_0 \in \{10, 15, 20\}$ and $\sigma_2 \in \{3, 4, 5\}$. The results are presented in \Cref{tab:sigma0_sigma2}.

We observe that ICL accuracy remains stable across different $\sigma_0$ and $\sigma_2$, ranging from $37.08\%$ to $38.41\%$, showing an improvement over the $36.10\%$ achieved by DP few-shot generation. This stability arises because, despite varying $\sigma_0$ and $\sigma_2$, the resulting change in $\sigma_1$ (noise multiplier of projected mean estimation) is very small (0.90–0.92). }

\begin{table*}[h!]
    \caption{ICL performance on the test dataset of MIT-G across varying $\sigma_0$ and $\sigma_2$ values, showing mean and standard deviation of accuracy over 5 runs with different random seeds. We set $\varepsilon = 8$, $\hat{T} = 2$, $\lambda =0.2$. }
    \label{tab:sigma0_sigma2}
    %\vspace{-0.1in}
    \begin{center}
    \adjustbox{max width= \linewidth}{
    \begin{tabular}{ccccc}
    \Xhline{3\arrayrulewidth}
      & $\sigma_0 = 10$ & $\sigma_0 = 15$ & $\sigma_0 = 20$\\
    \Xhline{2\arrayrulewidth}
    $\sigma_2 = 3$  & $37.08_{3.92}$ & $37.87_{4.92}$ & $38.26_{4.55}$\\
    $\sigma_2 = 4$  & $37.15_{3.85}$ & $38.28_{4.30}$ & $38.41_{4.71}$\\
    $\sigma_2 = 5$  & $38.31_{5.03}$ & $37.56_{4.62}$ & $38.18_{4.72}$\\
    \Xhline{3\arrayrulewidth}
    \end{tabular}
    }
    \end{center}
    
    \begin{center}
        \scriptsize

	\end{center} 
	\vspace{-0.2in}
		
\end{table*}

\subsection{Hyperparameters}\label{appx:hyperparameter}

To ensure a fair comparison, we follow the guidance from \citet{tang2023privacy} to select parameters for the DP few-shot generation algorithm. Specifically, $K$ is held constant at 100, and a grid search is performed to determine the optimal values for $M$ and $N$ ($N\in\{1,2,4\}$, $MN \in \{20,40,80\}$). We also use the same values of $T_\text{max}$ as in \citet{tang2023privacy} for each task. Hyperparameters for the DP few-shot generation algorithm presented in \Cref{tab:r} are provided in \Cref{tab:tang}. We then use the same values of $M$, $N$, $T_\text{max}$ and $K$ for our AdaDPSyn without any additional hyperparameter tuning.

Hyperparameters and privacy parameters for the results presented in \Cref{tab:comp} are provided in \Cref{tab:main para agnews}-\Cref{tab:main para mitd}.

\begin{table*}[h!]
    \caption{Hyperparameters for DP few-shot generation \citep{tang2023privacy} presented in \Cref{tab:r} and \Cref{tab:comp}. We choose $n_\text{shot}=4$ and $K=100$ for all tasks.}
    \label{tab:tang}
    \begin{center}
    \adjustbox{max width= \linewidth}{
    \begin{tabular}{ccccc}
    \Xhline{3\arrayrulewidth}
    Task & $M$ & $N$  & $T_{\text{max}}$ & DP \\
    \Xhline{2\arrayrulewidth}
    AGNews & 10 & 2  & 100 & Gaussian\\
    DBPedia & 10 & 2  & 100 & Gaussian\\
    TREC & 20 & 2  & 15 & Gaussian\\
    MIT-G & 40 & 1  & 20 & Gaussian\\
    MIT-D & 40 & 1  & 20 & Gaussian\\
    \Xhline{3\arrayrulewidth}
    \end{tabular}
    }
    \end{center}
    
    \begin{center}
        \scriptsize

	\end{center} 
	\vspace{-0.2in}
		
\end{table*}

\begin{table*}[h!]
    \caption{Hyperparameters and privacy parameters for the main results of AGNews task presented in \Cref{tab:comp}. We choose $M = 10$, $N=2$, $T_{\text{max}}=100$, $n_\text{shot}=4$ and $K=100$.}
    \label{tab:main para agnews}
    \begin{center}
    \adjustbox{max width= \linewidth}{
    \begin{tabular}{cccccc}
    \Xhline{3\arrayrulewidth}
     & $\hat{T}$ & $\lambda$ & $\sigma_0$ & $\sigma_2$ & $\sigma_1$\\
    \Xhline{2\arrayrulewidth}
    $\varepsilon = 1$ & 1 &  0.1 & 10 & 3 & 1.23 \\
    $\varepsilon = 2$ & 1 &  0.15 & 10 & 3 & 0.92 \\
    $\varepsilon = 4$ & 1 &  0.2 & 10 & 3 & 0.71 \\
    $\varepsilon = 8$ & 1 &  0.2 & 10 & 3 & 0.58 \\

    \Xhline{3\arrayrulewidth}
    \end{tabular}
    }
    \end{center}
    
    \begin{center}
        \scriptsize

	\end{center} 
	\vspace{-0.2in}
		
\end{table*}

\begin{table*}[h!]
    \caption{Hyperparameters and privacy parameters for the main results of DBPedia task presented in \Cref{tab:comp}. We choose $M = 10$, $N=2$, $T_{\text{max}}=100$, $n_\text{shot}=4$ and $K=100$.}
    \label{tab:main para dbpedia}
    \begin{center}
    \adjustbox{max width= \linewidth}{
    \begin{tabular}{cccccc}
    \Xhline{3\arrayrulewidth}
     & $\hat{T}$ & $\lambda$ & $\sigma_0$ & $\sigma_2$ & $\sigma_1 $\\
    \Xhline{2\arrayrulewidth}
    $\varepsilon = 1$ & 1 &  0.1 & 10 & 3 & 1.54 \\
    $\varepsilon = 2$ & 1 &  0.1 & 10 & 3 & 1.14 \\
    $\varepsilon = 4$ & 1 &  0.2 & 10 & 3 & 0.89 \\
    $\varepsilon = 8$ & 1 &  0.2 & 10 & 3 & 0.73 \\

    \Xhline{3\arrayrulewidth}
    \end{tabular}
    }
    \end{center}
    
    \begin{center}
        \scriptsize

	\end{center} 
	\vspace{-0.2in}
		
\end{table*}

\begin{table*}[h!]
    \caption{Hyperparameters and privacy parameters for the main results of TREC task presented in \Cref{tab:comp}. We choose $M = 20$, $N=2$, $T_{\text{max}}=15$, $n_\text{shot}=4$ and $K=100$.}
    \label{tab:main para trec}
    \begin{center}
    \adjustbox{max width= \linewidth}{
    \begin{tabular}{cccccc}
    \Xhline{3\arrayrulewidth}
     & $\hat{T}$ & $\lambda$ & $\sigma_0$ & $\sigma_2$ & $\sigma_1$\\
    \Xhline{2\arrayrulewidth}
    $\varepsilon = 1$ & 1 &  0.1 & 17.5 & 6 & 2.52 \\
    $\varepsilon = 2$ & 1 &  0.2 & 15 & 5 & 1.95 \\
    $\varepsilon = 4$ & 1 &  0.15 & 10 & 5 & 1.15 \\
    $\varepsilon = 8$ & 2 &  0.25 & 15 & 5 & 1.09 \\

    \Xhline{3\arrayrulewidth}
    \end{tabular}
    }
    \end{center}
    
    \begin{center}
        \scriptsize

	\end{center} 
	\vspace{-0.2in}
		
\end{table*}

\begin{table*}[h!]
    \caption{Hyperparameters and privacy parameters for the main results of MIT-G task presented in \Cref{tab:comp}. We choose $M = 40$, $N=1$, $T_{\text{max}}=20$, $n_\text{shot}=4$ and $K=100$.}
    \label{tab:main para mitg}
    \begin{center}
    \adjustbox{max width= \linewidth}{
    \begin{tabular}{cccccc}
    \Xhline{3\arrayrulewidth}
     & $\hat{T}$ & $\lambda$ & $\sigma_0$ & $\sigma_2$ & $\sigma_1 $\\
    \Xhline{2\arrayrulewidth}
    $\varepsilon = 1$ & 1 &  0.3 & 15 & 6 & 1.59 \\
    $\varepsilon = 2$ & 1 &  0.35 & 10 & 6 & 1.17 \\
    $\varepsilon = 4$ & 2 &  0.25 & 10 & 6 & 1.12 \\
    $\varepsilon = 8$ & 2 &  0.2 & 10 & 5 & 0.90 \\

    \Xhline{3\arrayrulewidth}
    \end{tabular}
    }
    \end{center}
    
    \begin{center}
        \scriptsize

	\end{center} 
	\vspace{-0.2in}
		
\end{table*}

\begin{table*}[h!]
    \caption{Hyperparameters and privacy parameters for the main results of MIT-D task presented in \Cref{tab:comp}. We choose $M = 40$, $N=1$, $T_{\text{max}}=20$, $n_\text{shot}=4$ and $K=100$.}
    \label{tab:main para mitd}
    \begin{center}
    \adjustbox{max width= \linewidth}{
    \begin{tabular}{cccccc}
    \Xhline{3\arrayrulewidth}
     & $\hat{T}$ & $\lambda$ & $\sigma_0$ & $\sigma_2$ & $\sigma_1 $\\
    \Xhline{2\arrayrulewidth}
    $\varepsilon = 1$ & 1 &  0.15 & 17.5 & 6 & 2.57 \\
    $\varepsilon = 2$ & 1 &  0.15 & 17.5 & 6 & 1.49 \\
    $\varepsilon = 4$ & 1 &  0.3 & 15 & 6 & 1.07 \\
    $\varepsilon = 8$ & 1 &  0.15 & 15 & 5 & 0.83 \\

    \Xhline{3\arrayrulewidth}
    \end{tabular}
    }
    \end{center}
    
    \begin{center}
        \scriptsize

	\end{center} 
	\vspace{-0.2in}
		
\end{table*}

\subsection{Empirical Privacy Evaluation by Membership Inference Attack}\label{appx:MIAs}

While DP guarantee inherently provides a theoretical guarantee against privacy leakage, it is also important to assess empirical privacy risks. In this section, we evaluate the empirical privacy of AdaDPSyn against membership inference attack (MIA) \citep{shokri2017membership}.

We follow the prior work \citep{duan2023flocks}, where the authors instantiate a membership inference attack (MIA) in the ICL framework. The goal is to determine whether a given private example was used within the prompt. Following \citet{duan2023flocks}, we consider 1-shot ICL. We use the DBPedia dataset and apply the MIA for our AdaDPSyn algorithm. Specifically, we split the dataset into two parts for member and non-member samples. We use member samples to generate 1-shot synthetic demonstration with AdaDPSyn, repeated over 5 trials. Then we attempt MIA for the samples from members and non-members during ICL. For each synthetic demonstration, we run 20 trials for MIA and average across the 100 trials to calculate the AUC. Following \citet{duan2023flocks}, we also consider a non-private baseline $\varepsilon=\infty$ of using actual samples from the private dataset in the prompt. For this baseline, we also average across 100 trials for AUC. The results are presented in \Cref{tab:mia}.

\begin{table*}[h!]
    \caption{Empirical privacy evaluation for 1-shot ICL by MIA on DBPedia dataset.}
    \label{tab:mia}
    \begin{center}
    \adjustbox{max width= \linewidth}{
    \begin{tabular}{cccccc}
    \Xhline{2\arrayrulewidth}
    $\varepsilon$ & $\varepsilon=1$ & $\varepsilon=2$ & $\varepsilon=4$ & $\varepsilon=8$ & $\varepsilon=\infty$ \\
    \Xhline{1.5\arrayrulewidth}
    MIA AUC & 50.52 & 51.58 & 51.05 & 52.63 & 77.37 \\
    \Xhline{2\arrayrulewidth}
    \end{tabular}
    }
    \end{center}
    
    \begin{center}
        \scriptsize

	\end{center} 
	\vspace{-0.2in}
		
\end{table*}

We observe that using actual samples from the private dataset leads to successful MIA results (77.37 in the $\varepsilon=\infty$ case). AdaDPSyn reduces MIA AUC to almost random guesses, showing the effectiveness of our method against MIAs.

{\subsection{Instruction-Tuned Llama-2-7B Results}\label{appx:instruct_finetuned}

In this section, we perform experiments on the instruction fine-tuned Llama-2-7b model. We present the results on DBPedia with $\varepsilon=4$ using the Llama-2-7b-chat model in \Cref{tab:Llama-2-7b-chat}.

We observe that AdaDPSyn outperforms DP few-shot generation and performs close to non-private baselines on the instruction fine-tuned Llama-2-7B-chat model. Additionally, Llama-2-7B-chat achieves better overall performance compared to Llama-2-7B, showing the benefits of instruction fine-tuning. }

\begin{table*}[h!]
    \caption{ICL performance on DBPedia with $\varepsilon=4$ using Llama-2-7b-chat and Llama-2-7b.}
    \label{tab:Llama-2-7b-chat}
    \begin{center}
    \adjustbox{max width= \linewidth}{
    \begin{tabular}{ccccc}
    \Xhline{2\arrayrulewidth}
    Model & $\varepsilon=4$ (DP few-shot generation) & $\varepsilon=4$ (AdaDPSyn) & $\varepsilon=\infty$ (Alg~\ref{alg:DP few-shot generation}, $\sigma = 0$) & $\varepsilon=\infty$ (rand) \\
    \Xhline{1.5\arrayrulewidth}
    LLama-2-7B & 64.92 & 67.12 & 67.88 & 69.06 \\
    LLama-2-7B-chat & 74.22 & 75.14 & 75.70 & 76.90 \\
    \Xhline{2\arrayrulewidth}
    \end{tabular}
    }
    \end{center}
    
    \begin{center}
        \scriptsize

	\end{center} 
	\vspace{-0.2in}
		
\end{table*}
{
\subsection{Varying LLMs of Similar Scale}

We perform experiments to compare Llama-2-7b and Gemma-7b on DBPedia with $\varepsilon=4$. The results are presented in \Cref{tab:gemma-7b}.

We observe that AdaDPSyn outperforms DP few-shot generation and performs close to non-private baselines on both models. Additionally, Gemma-7B achieves better overall performance compared to Llama-2-7B.} 

\begin{table*}[h!]
    \caption{ICL performance on DBPedia with $\varepsilon=4$ using Llama-2-7b and Gemma-7b.}
    \label{tab:gemma-7b}
    \begin{center}
    \adjustbox{max width= \linewidth}{
    \begin{tabular}{ccccc}
    \Xhline{2\arrayrulewidth}
    Model & $\varepsilon=4$ (DP few-shot generation) & $\varepsilon=4$ (AdaDPSyn) & $\varepsilon=\infty$ (Alg~\ref{alg:DP few-shot generation}, $\sigma = 0$) & $\varepsilon=\infty$ (rand) \\
    \Xhline{1.5\arrayrulewidth}
    LLama-2-7B & 64.92 & 67.12 & 67.88 & 69.06 \\
    Gemma-7B & 73.54 & 77.14 & 77.36 & 80.64 \\
    \Xhline{2\arrayrulewidth}
    \end{tabular}
    }
    \end{center}
    
    \begin{center}
        \scriptsize

	\end{center} 
	\vspace{-0.2in}
		
\end{table*}
{
\subsection{Varying Model Size}

We conduct experiments comparing Llama-2-7b and Llama-2-13b on DBPedia with $\varepsilon=4$. The results are presented in \Cref{tab:llama-2-13b}. We observe that Llama-2-13B shows better overall performance compared to Llama-2-7B, reflecting the advantages of increased model size. }

\begin{table*}[h!]
    \caption{ICL performance on DBPedia with $\varepsilon=4$ using Llama-2-7b and Llama-2-13b.}
    \label{tab:llama-2-13b}
    \begin{center}
    \adjustbox{max width= \linewidth}{
    \begin{tabular}{ccccc}
    \Xhline{2\arrayrulewidth}
    Model & $\varepsilon=4$ (DP few-shot generation) & $\varepsilon=4$ (AdaDPSyn) & $\varepsilon=\infty$ (Alg~\ref{alg:DP few-shot generation}, $\sigma = 0$) & $\varepsilon=\infty$ (rand) \\
    \Xhline{1.5\arrayrulewidth}
    LLama-2-7B & 64.92 & 67.12 & 67.88 & 69.06 \\
    Llama-2-13B & 68.24 & 69.78 & 70.46 & 70.60 \\
    \Xhline{2\arrayrulewidth}
    \end{tabular}
    }
    \end{center}
    
    \begin{center}
        \scriptsize

	\end{center} 
	\vspace{-0.2in}
		
\end{table*}

{\subsection{Time Cost Analysis}\label{appx:time_cost}

In this section, we present the time cost for generating one DP synthetic ICL sample with $\varepsilon=4$ in \Cref{tab:time_cost}.

\begin{table*}[h!]
    \caption{Time cost evaluation for generating one DP synthetic ICL sample with $\varepsilon=4$.}
    \label{tab:time_cost}
    \begin{center}
    \adjustbox{max width= \linewidth}{
    \begin{tabular}{cccccc}
    \Xhline{2\arrayrulewidth}
    Method (seconds) & AGNews & DBPedia & TREC & MIT-G & MIT-D \\
    \Xhline{1.5\arrayrulewidth}
    AdaDPSyn & 173.2608 & 182.9686 & 43.9833 & 116.7168 & 117.1299 \\

    DP few-shot generation & 173.2583 & 182.9660 & 43.9735 & 116.6792 & 117.0923 \\
    \Xhline{2\arrayrulewidth}
    \end{tabular}
    }
    \end{center}
    
    \begin{center}
        \scriptsize

	\end{center} 
	\vspace{-0.2in}
		
\end{table*}

We observe that the time cost of AdaDPSyn is very close to that of DP few-shot generation. Although AdaDPSyn includes additional steps that contribute to improved ICL accuracy, they do not introduce significant computational overhead.}

\subsection{Prompt Formats}\label{appx:prompts}

In this section, we present prompt formats during ICL and the prompt construction functions $PB(\cdot)$ for Next Token Generation (\Cref{alg:tang next token}). While we use the same prompt format during ICL following \citet{tang2023privacy}, we present them here in \Cref{tab:prompts-ICL-classification} and \Cref{tab:prompts-ICL-extraction} for the convenience of the reader. We also present the prompt construction functions $P B(\cdot)$ used in Next Token Generation (\Cref{alg:tang next token}) in
 \Cref{tab:prompts-classification} and \Cref{tab:prompts-extraction}.

\begin{table*}[h!]
    \caption{The prompts used during ICL for text classification tasks, taken from Table 7 of \citet{tang2023privacy}.}
\label{tab:prompts-ICL-classification}
    \begin{center}
    \adjustbox{max width= \linewidth}{
    \begin{tabular}{m{4em}m{10cm}m{5cm}}
    \Xhline{3 \arrayrulewidth}
    \multirow{1}{*}{Task}& {Prompt} & {Labels} \\  
    \Xhline{2\arrayrulewidth}
    AGNews & 
Classify the news articles into the categories of World, Sports, Business, and Technology.
\newline
\newline
Article: USATODAY.com - Retail sales bounced back a bit in July, and new claims for jobless benefits fell last week, the government said Thursday, indicating the economy is improving from a midsummer slump.
\newline
Answer: Business
\newline
\newline
Article: New hard-drive based devices feature color screens, support for WMP 10.
\newline
Answer:
& World, Sports, Business, Technology \\

    \Xhline{2\arrayrulewidth}
    DBPedia & 
Classify the documents based on whether they are about a Company, School, Artist, Athlete, Politician, Transportation, Building, Nature, Village, Animal, Plant, Album, Film, or Book.
\newline
\newline
Article: Geoffrey D. Falksen (born July 31 1982) is an American steampunk writer.
\newline
Answer: Artist
\newline
\newline
Article: The Perrin River is a 1.3-mile-long (2.1 km) tidal river in the U.S. state of Virginia. It is a small inlet on the north shore of the York River near that river’s mouth at Chesapeake Bay.
\newline
Answer:
& Company, School, Artist, Athlete, Politician, Transportation, Building, Nature, Village, Animal, Plant, Album, Film, Book \\ 
\Xhline{2\arrayrulewidth}
    TREC & 
Classify the questions based on whether their answer type is a Number, Location, Person, Description, Entity, or Abbreviation.
\newline
\newline
Question: How did serfdom develop in and then leave Russia?
\newline
Answer Type: Description
\newline
\newline
Question: When was Ozzy Osbourne born? 
\newline
Answer Type:
& Number, Location, Person, Description, Entity, Abbreviation \\ 
    \Xhline{3\arrayrulewidth}
    \end{tabular}

    }
    \end{center}

\end{table*}

\begin{table*}[h!]
    \caption{The prompts used during ICL for information extraction tasks, taken from Table 8 of \citet{tang2023privacy}.}
\label{tab:prompts-ICL-extraction}
    \begin{center}
    \adjustbox{max width= \linewidth}{
    \begin{tabular}{m{4em}m{15cm}}
    \Xhline{3 \arrayrulewidth}
    \multirow{1}{*}{Task}& {Prompt}  \\  
    \Xhline{2\arrayrulewidth}
    MIT-G & 
Sentence: last to a famous series of animated movies about a big green ogre and his donkey and cat friends
\newline
Genre: animated
\newline
\newline
Sentence: what is a great comedy featuring the talents of steve carell as a loser looking for a friend
\newline
Genre:
\\

    \Xhline{2\arrayrulewidth}
    MIT-D & 
Sentence: in 2005 director christopher nolan rebooted a legendary dc comics superhero with a darker grittier edge in which movie
\newline
Director: christopher nolan
\newline
\newline
Sentence: what 1967 mike nichols film features dustin hoffman in romantic interludes with anne bancroft as mrs robinson
\newline
Director:
\\ 

    \Xhline{3\arrayrulewidth}
    \end{tabular}

    }
    \end{center}

\end{table*}

\begin{table*}[h!]
    \caption{Prompt construction function $P B(\text{instruction}, \mathcal{D}, y)$ used in \Cref{alg:tang next token} for text classification tasks, taken from Table 5 of \citet{tang2023privacy}.}
\label{tab:prompts-classification}
    \begin{center}
    \adjustbox{max width= \linewidth}{
    \begin{tabular}{m{4em}m{10cm}m{5cm}}
    \Xhline{3 \arrayrulewidth}
    \multirow{1}{*}{Task}& {Prompt construction function $P B(\text{instruction}, \mathcal{D}, y)$} & {Labels} \\  
    \Xhline{2\arrayrulewidth}
    AGNews & 
Given a label of news type, generate the chosen type of
news accordingly.
\newline
\newline
News Type: World
\newline
Text: Australia boosts anti-terror measures at small airports SYDNEY: The Australian government announced a major security upgrade for nearly ...
\newline
\newline
News Type: World
\newline
Text:
& World, Sports, Business, Technology \\

    \Xhline{2\arrayrulewidth}
    DBPedia & 
Given a label of document type, generate the chosen type of document accordingly.
\newline
\newline
Document Type: Company
\newline
Text: Cherry Lane Music was founded in 1960 by Milton Okun in the apartment above the Cherry Lane Theater in Greenwich Village of New York City...
\newline
\newline
Document Type: Company
\newline
Text:
& Company, School, Artist, Athlete, Politician, Transportation, Building, Nature, Village, Animal, Plant, Album, Film, Book \\ 
\Xhline{2\arrayrulewidth}
    TREC & 
Given a label of answer type, generate a question based on the given answer type accordingly.
\newline
\newline
Answer Type: Number
\newline
Text: How many people in the world speak French?
\newline
\newline
Answer Type: Number
\newline
Text:
& Number, Location, Person, Description, Entity, Abbreviation \\ 
    \Xhline{3\arrayrulewidth}
    \end{tabular}

    }
    \end{center}

\end{table*}

\begin{table*}[h!]
    \caption{Prompt construction function $P B(\text{instruction}, \mathcal{D}, y)$ used in \Cref{alg:tang next token} for information extraction tasks, taken from Table 6 of \citet{tang2023privacy}.}
\label{tab:prompts-extraction}
    \begin{center}
    \adjustbox{max width= \linewidth}{
    \begin{tabular}{m{4em}m{15cm}}
    \Xhline{3 \arrayrulewidth}
    \multirow{1}{*}{Task}& {Prompt construction function $P B(\text{instruction}, \mathcal{D}, y)$}  \\  
    \Xhline{2\arrayrulewidth}
    MIT-G & 
Given a genre for the film, generate a description accordingly and make sure to include the given genre in the description.
\newline
\newline
Genre: holiday
\newline
Sentence: what is the name of this perennial holiday favorite featuring an elderly miser learning the error of his ways thanks to three ghostly visitations
\newline
\newline
Genre: action
\newline
Sentence:
\\

    \Xhline{2\arrayrulewidth}
    MIT-D & 
Given a director for the film, generate a description accordingly and make sure to include the given director in the description.
\newline
\newline
Director: pixar
\newline
Sentence: what pixar animated film features a talking dog named dug
\newline
\newline
Director: disney
\newline
Sentence:
\\ 

    \Xhline{3\arrayrulewidth}
    \end{tabular}

    }
    \end{center}

\end{table*}

\subsection{Demonstrations}\label{appx:demo}

In this section, we present generated samples using AdaDPSyn at $\varepsilon = 4$ for the DBPedia dataset, as shown in \Cref{tab:demo_adadpsyn_db}. These results are compared to those obtained using DP few-shot generation, displayed in \Cref{tab:demo_tang_db}. Both methods use the LLama-2-7B-hf model.

We observe that samples generated by AdaDPSyn show good coherence and clarity. For example, the first sample in \Cref{tab:demo_adadpsyn_db}, which describes "John P. Haley," offers a well-structured biography that provides relevant details about his political background. Although there are some factual inaccuracies, these errors do not affect the sample’s ability to perform well in the classification task. The generated text includes sufficient context and key information to ensure correct classification while maintaining the required privacy guarantees. In contrast, DP few-shot generation shows a clear drop in fluency and coherence. For example, the second sample in \Cref{tab:demo_tang_db} contains fragmented sentences and disjointed phrases, making it difficult to follow. These results show the effectiveness of our adaptive method, which maintains strong privacy guarantees while producing more coherent and usable text compared to DP few-shot generation.

\begin{table*}[h!]
    \caption{Examples of generated samples using AdaDPSyn at $\varepsilon=4$ for DBPedia.}
\label{tab:demo_adadpsyn_db}
    \begin{center}
    \adjustbox{max width= \linewidth}{
    \begin{tabular}{m{15cm}m{5em}}
    \Xhline{3 \arrayrulewidth}
    \multirow{1}{*}{Demonstration}& {Label}  \\  
    \Xhline{2\arrayrulewidth}
John P. Haley (born July 23, 1946) is an American politician and attorney who served as a Democratic member of the U.S. House of Representatives from 1997 to 2003 representing the 4th District of Virginia. He was a candidate for Governor of Virginia in 2005. He was a candidate for the Democratic nomination for the U.S. Senate in 2006. & Politician \\

    \Xhline{2\arrayrulewidth}
    
The 1899-1900 YMCA Building is a historic building located at 121 South Main Street in downtown Louisville Kentucky. It was built in 1899-1900 and is a three-story brick building with a mansard roof. It is an example of the Richardsonian Romanesque style. It was designed by Louisville architect Henry Whitestone. The building was added to the National Register of Historic & Building \\
    \Xhline{2\arrayrulewidth}
The genus Echium includes about 60 species of flowering plants in the family Boraginaceae. The genus is native to the Mediterranean region, Macaronesia, and Africa. Echium species are annual or perennial herbs or shrubs, some of which are cultivated as ornamental plants. They are characterized by showy flowers with a prominent, tubular corolla and two-lipped labellum. The fruit is a dry caps & Plant \\
    \Xhline{2\arrayrulewidth}
Lena Malkus (born 1985) is a Finnish female professional golfer. She plays on the Ladies European Tour. She was the Finnish national champion in 2007. She won the 2009 Finnish Ladies Masters. She won the 2010 Finnish Ladies Masters. She won the 2011 Finnish Ladies Masters. She won the 2012 Finnish Ladies Masters & Athlete\\

\Xhline{2\arrayrulewidth}
    
John Adams (October 30, 1735 – July 4, 1826) was an American statesman and Founding Father who served as the first Vice President (1789–1797) and second President of the United States (1797–1801). He was a leader of American independence in 1776, and served as the first American minister to the new nation of France from 1 & Politician \\

    \Xhline{2\arrayrulewidth}
    
The house was designed by architect Frank Lloyd Wright and built in 1952 for Edgar J. and Liliane Kaufmann in the community of Mill Run Pennsylvania. The home, which sits on the side of a hill overlooking Bear Run in the Laurel Highlands region of southwestern Pennsylvania, is made of reinforced concrete and features a 150-foot (46 m) long cantilevered section, which was the longest cantilever of & Building \\
    \Xhline{2\arrayrulewidth}
Vegetable oils are triglycerides extracted from plants. a broad term for fats and oils, including sesame oil, coconut oil, palm oil, and soybean oil. They are liquid at room temperature, and, unlike animal fats, they do not solidify in the refrigerator. Vegetable oils are liquids at room temperature, and, unlike animal fats, they do not solidify in the refrigerator & Plant \\
    \Xhline{2\arrayrulewidth}
John McVeigh (born 3 March 1990) is an Australian cricketer. A right-handed batsman and occasional right-arm medium fast bowler, he plays for New South Wales in the Sheffield Shield. He made his debut in first-class cricket for New South Wales against South Australia at the Sydney Cricket Ground on 3 January 2012. McVeigh was part of the New South & Athlete\\

\Xhline{2\arrayrulewidth}

John Adam Smith (born 1952) is a Canadian politician who was the 17th Premier of Prince Edward Island from 1993 to 1996. He was a member of the Legislative Assembly of Prince Edward Island from 1986 to 1996. He was a member of the Liberal Party of Prince Edward Island. He was a member of the Legislative Assembly of Prince Edward Island from 1986 to & Politician \\

\Xhline{2\arrayrulewidth}

The home of the modern... city the fastest growing city in the United States It is the single largest cultural and entertainment center in the Chicago metropolitan area. While the city has few topographic features like hills or rivers some of its outlying communities do have local landmarks. The city is laid out in a grid pattern with wide Chicago Avenue as the geographic center. The Chicago River runs through the city past the Chicago Merchandise Mart and the Chicago Board of Trade This river is & Building \\

\Xhline{2\arrayrulewidth}
6 plants with blue flowers bloom from June to October. Flowers are$<$br$>$small.$<$br$>$Some other species are in a flower market on $<$time$>$today$</$time$>$.$<$br$>$I can take any bus. I don't think a tour. My sister goes to the tour of the flower market by bike. She's back soon. She bought three plants to my family. I hope her tour is going well. They were beautiful.  The blue petals & Plant \\

\Xhline{2\arrayrulewidth}

$[[$ "David$\_$Whitten" $|$ David Whitten$]]$ (born 2 months ago) is a record-breaking South African sportsperson. He is the three-time Formula One World Champion, and the first European to win the European Tour. He is also the first motorcycle racer to win the Suzuki Australian Superbike Championship. He is the sportsperson with the most Grand Slam titles, and the first to win the Race of Champions. He is & Athlete\\

    \Xhline{3\arrayrulewidth}
    \end{tabular}

    }
    \end{center}

\end{table*}

\begin{table*}[h!]
    \caption{Examples of generated samples using DP few-shot generation at $\varepsilon=4$ for DBPedia.}
\label{tab:demo_tang_db}
    \begin{center}
    \adjustbox{max width= \linewidth}{
    \begin{tabular}{m{15cm}m{5em}}
    \Xhline{3 \arrayrulewidth}
    \multirow{1}{*}{Demonstration}& {Label}  \\  
    \Xhline{2\arrayrulewidth}
'Nic'l Kozloff, born March 12, 1975 in San Francisco, California, holds political views generally in accordance with his political ideological affiliatin with the Democratic Party.. Kozloff attended high school at San Francisco University High School. After attending University of California, Berkeley he earned a bachelor's degree in history from Stanford with a concentration in political science. The 28yr -old is the current & Politician \\

    \Xhline{2\arrayrulewidth}
    
The foundation [imgref tag2 doc], pausing through these great lengths of limestone led more men to gather still! Upon the structure like leakers floord celephaiden stepped between four. Many words it means i d been taken not and some a place inside when only is was something strange within she kept there or only into many time his would fall would reach those upon at you what as him back will help your at their no how may with just did upon which where out these & Building \\
    \Xhline{2\arrayrulewidth}
[Name of a plant] is tall with a trunk diameter of up to 30 inches, a canopy diameter of up to 50 feet. The plant grows up to 20 m (66 ft) and is a perennial, deciduous plant. It grows well in a range of climates from the temperate zones in eastern and western North America.  It can tolerate drought. The tree can live 300 years.  It grows & Plant \\
    \Xhline{2\arrayrulewidth}
Olympic compettessor are : Milt Stahmetze the topmost score were was -    DUHSHA the Olympics score on him , i and even all , other men for was + is me score will like one at each he my other as two have no do can what it    ACT on have ever, two time been him when first last with has what three year no how then you , this they I i this . were + but she - their other can how time a & Athlete\\
    \Xhline{2\arrayrulewidth}
    
6 politcian found (6. The text for your candidate(economiser and conserv) could 4) in that. So vote and economical! If this article. But i $<$0xE2$>$ like them as good in parliament too so do what yi have (you could get better one(and then he. Is so many candidates you might also try other type, which  are as politica in $<$span: The only two candidate with name politci $<$8, $<$x6C & Politician \\

    \Xhline{2\arrayrulewidth}
Buildings make up the most amount of energy use in buildings and account for more than 40$\%$ of energy used for commercial, public and institutional uses, and nearly 12$\%$ are used to power the homes in which we live. Buildings also contribute 30 to 40$\%$ of greenhouse gas emissions and consume a lot of materials such as concrete, steel, plastic, aluminum, wood, and copper. Buildings use energy and materials, & Building \\
     \Xhline{2\arrayrulewidth}
Lumo, Lumo is an adherent plant with thin stems, oval leaves and yellow flowers. As the leaves turn red they are eaten by squirrels, and this stimulates the plant to produce flowers. Lumo is an endemic of South and North-Eastern Kenya. Lumo is used to treat stomach ulcers in Kenya. The flowers are eaten to reduce blood pressure. It is also used to so & Plant \\

      \Xhline{2\arrayrulewidth}
    
 7. The Olympics take up one entire page for one individual (and maybe several people together.) For that event we get more people from our audience'. They don$\&$r'ts; come. This would lead more visitors coming over time which makes money (proft for a magazine.). But what will also work with your document and you could see what I wrote on  the next section to learn to see and feel like they could help people, to find your information for that one particular subject of & Athlete\\

 \Xhline{2\arrayrulewidth}
Internal (Ind vsOuti $\dots$ .. and evopt): For public speaker whose presentation need it by [$<$fruit flavid = indemonde..r1 $<$foontime aet4 tusal $]$ then try .. dts are useful things especially things were fntions gop on at its location bocre than (co$/$ceofonon rtrttirit...'teefft') by public oers can know $<$info $\int$2 & Politician \\
\Xhline{2\arrayrulewidth}
House $[$the $]$ Linden Farm Maitree in Vicks Fakauala Kibila Township; where land records keep register or give documentation cert with witness according at Police'2$[$...snidifoo bend$]$(-Kotwela): Mr Prilminary officer by OD at ISA G1$\&$ TPS Mr Sauman Saulu APC : Deputial Suphanahanepoa from his P1 or original I can witness(this man with p & Building \\
\Xhline{2\arrayrulewidth}
One has given   ten the key whereinto these caster [chaps may eat any three food any nine parts must hold]. Also be noted down into nin or for eleven will return not twenty may turn fifteen when 8 not go more any any into turn go twelve this up some back many so thrite   if I said more as into but or$<$ a in which have at you must eat   0 I shall     it        with seven from down with at be $<$ i eat $[$have into & Plant \\
\Xhline{2\arrayrulewidth}

C-sharp expert Tuan Lee brings professional perspective to Azure architecture. Mr. Lee had originally only viewed using PaaS as a place to  rent/ buy Windows , server administration ...     read more or manage Linux Web developers prefer web to have, server expert may instead for having something (if don't server available; do or as their for data are managing some files more use databases will them they think not all documents server more... For can access control manage only then should people these& Athlete\\
    \Xhline{3\arrayrulewidth}
    \end{tabular}

    }
    \end{center}

\end{table*}

{\subsection{Comparison of DP and Real Samples}\label{appx:comp_DP_real}

In this section, we compare the DP samples generated by AdaDPSyn with the real private samples. We use the MIT-G dataset for this analysis because this task shows a relatively large performance gap between private and non-private methods. We present the private samples in \Cref{tab:demo_real}, and the DP samples generated using AdaDPSyn with $\varepsilon=1,2,4,8$ in \Cref{tab:demo_private}.

In the original private samples, the labels (movie genres) are explicitly included in the text, as the task is to extract the genre from the text. Similarly, in the DP samples generated by AdaDPSyn ($\varepsilon=1,2,4,8$), the labels are included most of the time, highlighted in blue in \Cref{tab:demo_private}. Occasionally, the generated samples use related words instead of the exact label (e.g., ``boxer" instead of ``boxing" or ``basketball" instead of ``sports"), highlighted in red in \Cref{tab:demo_private}. Despite these substitutions, the DP samples maintain semantic relevance to the task. However, compared to the private samples, the DP samples are more general and less detailed, sometimes introducing noise or repetition. These differences reflect the trade-offs for ensuring DP. With different $\varepsilon$ values ($\varepsilon=1,2,4,8$), we do not observe significant differences in the quality of the DP samples. This aligns with our experimental results, where the ICL accuracy shows only small differences across $\varepsilon=1,2,4,8$ (between $37.41\%$ and $38.31\%$).}

{We also conduct experiments with smaller $\varepsilon$ (0.2, 0.3, 0.5) on the MIT-G dataset. The DP samples generated using AdaDPSyn with these $\varepsilon$ values are included in \Cref{tab:demo_private}, and we present the performance results in \Cref{tab:small_eps}.

For $\varepsilon=1,2,4,8$, the ICL performance shows only small differences (between $37.41\%$ and $38.31\%$). However, when $\varepsilon$ decreases to 0.5, ICL accuracy drops to $30.56\%$, and further decreases to $23.75\%$ for $\varepsilon=0.3$. At $\varepsilon=0.2$, the ICL accuracy is $13.77\%$, which is nearly identical to the fully private baseline ($\varepsilon=0$). 

In Table 30, we observe that, for $\varepsilon =1, 2, 4, 8$, the DP samples maintain similar quality, with labels (movie genres) included in the text most of the time. At $\varepsilon =0.5$, quality decreases with occasional noise, and labels are included less frequently. 
At $\varepsilon = 0.3$, labels are only occasionally included, and at $\varepsilon = 0.2$, labels are entirely absent, with the samples containing more noise.

}

\begin{table*}[h!]
    \caption{Real private samples from MIT-G.}
\label{tab:demo_real}
    \begin{center}
    \adjustbox{max width= \linewidth}{
    \begin{tabular}{m{18cm}}
    \Xhline{3 \arrayrulewidth}
    \multirow{1}{*}{Real Private Sample: Text (Label)}  \\  
    \Xhline{2\arrayrulewidth}
what 2011 dennis dugan romantic comedy starred adam sandler jennifer aniston nicole kidman and brooklyn decker (romantic comedy)

\vspace{0.05in}
name the 1993 epic movie starring liam neeson as a german businessman that saved over a thousand polish jewish refugees during the holocost (epic)

\vspace{0.05in}
what is the classic boxing film that starred and was written by superstar sylvester stallone (boxing)

\vspace{0.05in}
what 2001 fantasy film based on a novel by j r r tolkien tells the story frodo baggins (fantasy film)

\vspace{0.05in}
name the disney live action film based upon a classic mickey cartoon where mickey becomes a magician (live action)

\vspace{0.05in}
in 2011 what raunchy comedy was released with four friends attempting to have a bachelor party in thailand where nothing seems to go right i m thinking of a movie where star character stu ed helms wakes up with a tattoo on his face made popular by mike tyson (raunchy comedy)

\vspace{0.05in}
what 1998 steven spielberg war film starred tom hanks matt damon edward burns and tom sizemore (war)

\vspace{0.05in}
what is the animated family movie about a man who has to look after three orphaned girls (animated family)

\vspace{0.05in}
name the sports movie about a high school basketball team that starred gene hackman as the coach (sports)

\vspace{0.05in}
in which animated movie did steve carell play a thief who is trying to steal the moon (animated)

\vspace{0.05in}
what 1956 science fiction film concerns aliens in a small california town who inhabit citizens bodies (science fiction)

\vspace{0.05in}
what 2007 biopic is about the story of jean dominique bauby and his trouble with locked in syndrome (biopic)

\vspace{0.05in}
in a decrepit south american village men are hired to transport an urgent nitroglycerine shipment without the equipment that would make it safe is the plot of this 1953 action thriller (action thriller)

\vspace{0.05in}
young traveler allan gray discovers evidence of the supernatural in this black and white 1932 horror film (horror)

\vspace{0.05in}
this computer animated film stars johnny depp as a timid chameleon who finds courage in the desert (computer animated)

\vspace{0.05in}
what was the name of the cg movie with johnny depp as a lizard in the desert (cg)

\vspace{0.05in}
name the 1990 american romantic fantasy film starring patrick swayze demi moore tony goldwyn and whoopi goldberg (american romantic fantasy)

\vspace{0.05in}
what 2002 brazilian movie had a television show spin off called city of men in 2003 (brazilian)

\vspace{0.05in}
do you know the name of the comedy film starring an actor named michael that s based off of a video game (comedy)

\vspace{0.05in}
what is the 1989 comedy drama crime film starring tom hanks and beasley the dog (comedy drama crime)

\vspace{0.05in}
a classic childrens tale of a bear who enjoys his honey and his animal friends (childrens)

\vspace{0.05in}
superhero movie where team of superheros get together to defeat threats too big for just one hero (superhero movie)

\vspace{0.05in}
what is the superhero movie about a powerful but arrogant warrior who is cast out of asgard (superhero)

\vspace{0.05in}
tim burton used stop motion animation to tell the story of jack skellington in what 1993 disney movie (stop motion animation)

\vspace{0.05in}
what s the animated film made up of several stories which are set to classical music (animated)

\vspace{0.05in}
this movie musical features two rival gangs the sharks and the jets and lots of prancing and snapping and wailing for maria (musical)

\vspace{0.05in}
what is the 1992 american horror film starring tony todd as the titular character who has a hook for a hand (american horror)

\vspace{0.05in}
this one of the first classic horror movies thats been remade multiple times about zombies (classic horror)
\\

    \Xhline{3\arrayrulewidth}
    \end{tabular}

    }
    \end{center}

\end{table*}

\begin{table*}[h!]
    \caption{Generated DP samples using AdaDPSyn on MIT-G.}
\label{tab:demo_private}
    \begin{center}
    \adjustbox{max width= \linewidth}{
    \begin{tabular}{m{1.2cm}m{18cm}}
    \Xhline{3 \arrayrulewidth}
    \multirow{1}{*}{$\varepsilon$} &{DP Sample: Text (Label)}  \\  
    \Xhline{2\arrayrulewidth}

    $\varepsilon=0.2$ &
For Mary Baratani—smart, snay beautiful Miamity born who lost six sizes (romantic comedy) 

Sentence after my changes."Here some time...more $>$,"$<$/bold)$>$in London "before (epic)

Vancouver Wow$<$0xE5$>$feild! For long now Tito Tararanilla may give (boxing)
 
$<$0xED$>$He picked up rocks (but failed [intonation key! R8)) $\&$1 (live action)

When Harry becomes Dr. Audi McAr's room guest but isn't expected again (raunchy comedy)

With \tcr{bombs} clancing wildly an man a a few fell foming . man cming together (war) 

(for that in its purpose from something makes necessary have such ability will like must (an inter, (alien film)

$<$0xEB$>$ Anything where she wins ... has better movie written . One fine episode $\&$ everyone clapend (animated family)

(Example above replaced \tcr{bengaleafcricket} match nether on hrs: match win/ (sports)

... he started exploring areas forbidden throughout each human group except there ended up being definitely less area (animated)
    \\
    \Xhline{2\arrayrulewidth}
    $\varepsilon=0.3$ & 
For Mary, the path to true love \tcb{romantic comedy} took a weird turn when she unw (romantic comedy)

199BCE, the Persian Empire attack the Greek city-states. \tcb{Epic} cinemat (epic)

1976 sports drama film boxing, directed by Tariq Nashe (boxing)

199... movies called Rocky because rocky is a boxer and uh they (live action)

90s \tcb{raunchy comedy} remake of the 1980s classic comedic (raunchy comedy)

1917 is a \tcb{war} film film filled with action and also intrigue. WW1 (war)

1996 teen disaster film is about a group of teenagers who are dealing (alien film)

$<$0xEB$>$is a 201 in the direction of Lanipekun, according to the (animated family)

(Example) in the 1990s, the man who trained the \tcr{wrestler} " (sports)

... he started to find its way into the pivotal role of the bromantic comedy. (animated)
    \\
    \Xhline{2\arrayrulewidth}
    $\varepsilon=0.5$ & 
2006 \tcb{romantic comedy} about a single boy meeting a girl while they are both visiting (romantic comedy)

2006 \tcb{epic} movie about a legendary samurai's fight against Lord K (epic)

2010 is a biographical drama film about the life of \tcr{boxer} james j. (boxing)

200$<$strong$>$0$<$/strong$>$ and 2001 a space odys (fantasy film)

201 \tcr{Action} and Adventure Films Sentence 4 (live action)

2011s Crazy, Stupid, Love is a \tcb{raunchy comedy} which (raunchy comedy)

1998 \tcb{war} film Trenchcoat 22037 is not your average (war)

% alien film
% 2005 was a good year for alien films, starring Steven Speilberg and
% pseudo western crime drama
% 2007 pseudo western crime drama starring jared leto and aaron taylor j

201 mins. 201x201x201x2 (animated family)

2014 \tcb{sports} movie with shia labeouf and an old-school baseball style (sports)

2014 \tcb{animated} film about a cute girl trying to escape a mysterious castle while the (animated)
    \\
    \Xhline{2\arrayrulewidth}
    $\varepsilon=1$ & 
2004 \tcb{romantic comedy} about a woman who loses her job and goes to a resort (romantic comedy)
   
1959 \tcb{epic} film about the battle of waterloo starring charlton h (epic)

2006 \tcb{boxing} film starring 2006 boxing film starring (boxing)

2017 \tcb{fantasy film} about a young boy who discovers a magical world in his (fantasy film)

1990s \tcb{live action} film about a young boy who wants to be a firefigh (live action)

2011 \tcb{raunchy comedy} about a man who wakes up with a hangover and (raunchy comedy)

1998 \tcb{war} film directed by and starring Clint Eastwood. The film is based (war)

1998 \tcb{animated family} film about a young boy who finds a magical board game that trans (animated family)

1994 \tcb{sports} movie about a football player who loses his eye in a car accident and (sports)

1998 \tcb{animated} film about a boy who dreams of becoming a rock star and the road (animated) 
\\
    \Xhline{2\arrayrulewidth}
    $\varepsilon=2$ & 
2004 \tcb{romantic comedy} starring jennifer aniston and adam sandler (romantic comedy)

1959 \tcb{epic} film about the battle of waterloo starring charlton h (epic)

2005 film about a \tcr{boxer} who is the son of a boxer who was killed (boxing)

2017 \tcb{fantasy film} about a young boy who is transported to a magical world (fantasy film)

1990s \tcb{live action} film about a young boy who is taken to a magical land (live action)

2011 \tcb{raunchy comedy} about a man who wakes up in a hotel room with (raunchy comedy)

1998 \tcb{war} film directed by and starring Clint Eastwood. The film is based (war)

2016 \tcb{animated family} film about a young boy who wants to be a hero and his journey (animated family)

2014 film about a high school \tcr{basketball} team in the 1950s that (sports)

2010 \tcb{animated} film about a boy who gets a magic crayon that can draw anything (animated)
    \\
    
    \Xhline{2\arrayrulewidth}
    $\varepsilon=4$ & 
2005 \tcb{romantic comedy} starring jennifer aniston and vince vaug (romantic comedy)
    
1999 \tcb{epic} film about a young boy who is the son of a king and is (epic)

1976 film about a \tcr{boxer} who is trying to make it to the top of the (boxing)

2017 \tcb{fantasy film} directed by 2017 fantasy film directed by (fantasy film)

1999 \tcb{live action} film about a young boy who is transported to a magical world (live action)

2011 \tcb{raunchy comedy} about a man who is trying to get his life back on (raunchy comedy)

1942 \tcb{war} film directed by howard hawks and starring humphrey bog (war)

2010 \tcb{animated family} film about a young boy who befriends a dragon and helps (animated family)

2010 was a great year for \tcb{sports} movies, with the release of the 2 (sports)

2016 \tcb{animated} film about a boy who is a wizard and his pet dragon. (animated)
\\   
    \Xhline{2\arrayrulewidth}
    $\varepsilon=8$ & 

2009 \tcb{romantic comedy} starring jennifer aniston and jason bateman (romantic comedy)

1999 \tcb{epic} film about a young man who goes on a journey to find his father (epic)

1976 film about a \tcr{boxer} who is trying to make it to the top of the (boxing)

2017 \tcb{fantasy film} about a young boy who is transported to a magical world (fantasy film)

1999 \tcb{live action} film about a young boy who is transported to a magical world (live action)

2011 \tcb{raunchy comedy} about a man who is trying to get his life back on (raunchy comedy)

1917 is a \tcb{war} film about two soldiers who are sent on a mission to deliver a (war)

2010 \tcb{animated family} film about a young boy who is transported to a magical world (animated family)

2010 was a great year for \tcb{sports} movies, with the release of the 2 (sports)

2016 \tcb{animated} film about a boy who is transported to a magical world of g (animated)
\\

    \Xhline{3\arrayrulewidth}
    \end{tabular}

    }
    \end{center}

\end{table*}

\begin{table*}[h!]
    \caption{ICL performance on MIT-G using AdaDPSyn.}
    \label{tab:small_eps}
    \begin{center}
    \adjustbox{max width= \linewidth}{
    \begin{tabular}{cccccccc}
    \Xhline{2\arrayrulewidth}
    $\varepsilon=0$ & $\varepsilon=0.2$ & $\varepsilon=0.3$ & $\varepsilon=0.5$ & $\varepsilon=1$ & $\varepsilon=2$ & $\varepsilon=4$ & $\varepsilon=8$\\
    \Xhline{1.5\arrayrulewidth}
    13.62 & 13.77 & 23.75 & 30.56 & 37.59 & 37.41 & 37.85 & 38.31 \\
    \Xhline{2\arrayrulewidth}
    \end{tabular}
    }
    \end{center}
    
    \begin{center}
        \scriptsize

	\end{center} 
	\vspace{-0.2in}
		
\end{table*}

% \section{Private ICL for Dialogue Summarization}\label{appx:diag_summary}

% In this section, we study dialog summarization task on the SAMSum dataset \citep{gliwa2019samsum}. We consider a 4-shot ICL setup and use the same prompt format during ICL following \citet{wu2023privacy} (see \Cref{tab:prompts-ICL-summary} for the prompt structure). 

% a fully private $\varepsilon = 0$, similar to \citet{tang2023privacy} to generate synthetic 4-shot demonstrations using purely instructions without any private data, the prompt is provided in table.
% We consider non-private baselines 4-shot demonstrations randomly selected from the private dataset. We notice that generate ICL examples using only instructions without examples and the results closely match non-private baselines. table xx. this means

% {\bf Dataset and Metrics.} SAMSum dataset \citep{gliwa2019samsum} has conversations between people. It includes 14,732 training samples and 819 test samples. Following \citet{wu2023privacy}, we use ROUGE-1, ROUGE-2, and ROUGE-L as the evaluation metrics.

%%%%%%%%%%%%%%%%%%%%%%%%%%%%%%%%%%%%%%%%%%%%%%%%%%%%%%%%%%%%

\clearpage

\end{document}